\documentclass{llncs}

\sloppy
\usepackage[utf8]{inputenc}
\usepackage[T1]{fontenc}
\usepackage{enumerate}
\usepackage{graphicx}
\usepackage{amsfonts}
\usepackage{amssymb}
\usepackage{amsmath}
\usepackage{wasysym}
\usepackage{hyperref}
\usepackage{mathtools}
\usepackage{bibentry}
\usepackage{comment}
\usepackage{paralist}
\usepackage{enumitem}
\usepackage{todonotes}
\usepackage{hyperref}
\usepackage{thmtools}
\usepackage{float}
\usepackage{lineno}

\definecolor{red}{HTML}{000000}
\definecolor{blue}{HTML}{000000}

\graphicspath{{./figures/}}

\bibliographystyle{splncs04}

\title{List covering of regular multigraphs with semi-edges\thanks{The conference version appeared at IWOCA 2022~\cite{iwoca}. The first author is the corresponding author.}}

\author{Jan Bok\inst{1}\orcidID{0000-0002-7973-1361} \and Jiří Fiala\inst{2}\orcidID{0000-0002-8108-567X} \and Nikola Jedličková\inst{2}\orcidID{0000-0001-9518-6386} 
\and Jan Kratochvíl\inst{2}\orcidID{0000-0002-2620-6133} 
\and Pawe\l{} Rz\k{a}\.zewski\inst{3,4}\orcidID{0000-0001-7696-3848}
}

\institute{
Computer Science Institute, Faculty of Mathematics and Physics, Charles University, Prague, Czech Republic, \url{bok@iuuk.mff.cuni.cz}
\and
Department of Applied Mathematics, Faculty of Mathematics and Physics, Charles University, Prague, Czech Republic, \url{fiala,jedlickova,honza@kam.mff.cuni.cz}
\and
Warsaw University of Technology, Warsaw, Poland, \url{pawel.rzazewski@pw.edu.pl}
\and
University of Warsaw, Warsaw, Poland
}

\begin{document}

\maketitle

\begin{abstract}
In line with the recent development in topological graph theory, we are considering undirected graphs that are allowed to contain {\em multiple edges}, {\em loops}, and {\em semi-edges}. A graph is called {\em simple} if it contains no semi-edges, no loops, and no multiple edges.
A graph covering projection, also known as a locally bijective homomorphism, is a mapping between vertices and edges of two graphs which preserves incidences and which is a local bijection on the edge-neighborhood of every vertex. This notion stems from topological graph theory, but has also found applications in combinatorics and theoretical computer science.

It has been known that for every fixed simple regular graph $H$ of valency greater than 2, deciding if an input graph covers $H$ is NP-complete.  Graphs with semi-edges have been considered in this context only recently and only partial results on the complexity of covering such graphs are known so far. In this paper we consider the list version of the problem, called \textsc{List-$H$-Cover}, where the vertices and edges of the input graph come with lists of admissible targets. Our main result reads that the \textsc{List-$H$-Cover} problem is NP-complete for every regular graph $H$ of valency greater than 2 which contains at least one semi-simple vertex (i.e., a vertex which is incident with no loops, with no multiple edges and with at most one semi-edge).
Using this result we show the NP-co/polytime dichotomy for the computational complexity of \textsc{ List-$H$-Cover} for cubic graphs.
\end{abstract}

\section{Introduction}\label{sec:Intro}
\noindent\emph{Graphs.} In this paper, we consider undirected graphs in the most relaxed form of the definition -- they are allowed to contain multiple edges, loops and semi-edges. This is in line with the current development of topological graph theory, where loops, multiple edges, and 
semi-edges have become commonly accepted. Intuitively, a {\em semi-edge} (sometimes also called a \emph{half-edge} or a \emph{fin})
is an edge with just one end, in contrast with a {\em loop}, which has two ends, both in the same vertex. An {\em ordinary edge} has two ends, each in a different vertex. {\em Multiple edges} are edges incident with the same vertex (in case of loops and semi-edges), or the same pair of vertices (in case of ordinary edges). A graph is {\em simple} if it contains no loops, no semi-edges, and no multiple edges. For a formal definition, cf. Section~\ref{sec:prelim}.

\noindent\emph{Graph covering projections and related notions.}
For simple graphs $G$ and $H$, a covering projection from $G$ to $H$ is a mapping $f : V(G)  \to V(H)$,
such that (i) $f$ is  adjacency preserving (i.e., it is a graph homomorphism) and (ii) $f$ is bijective in the neighborhood of each vertex. The last condition means that if for some $v \in V(G)$ and $x \in V(H)$ we have $f(v) = x$, then for each neighbor $y$ of $x$ in $H$, there must be exactly one neighbor $w$ of $v$ in $G$ that is mapped to $y$.
For a fixed graph $H$, in the \textsc{$H$-Cover} problem we ask if an input graph $G$ admits a covering projection to $H$. A covering projection from $G$ to $H$ is also referred to as a {\em locally bijective homomorphism} from $G$ to $H$. For graphs with loops, multiple edges and semi-edges, the covering projection is defined both on vertices and edges as an incidence preserving mapping which is bijective on edge-neighborhoods of all vertices. For a formal definition, cf. Section~\ref{sec:prelim}.

The notion of a graph covering projection, as a natural discretization of the covering projection used in topology, originates (not surprisingly) in topological graph theory. However, since then it has found numerous applications elsewhere.
Covering projections were used for constructing highly symmetrical graphs~\cite{k:Biggs74,n:Djokovic74,n:Gardiner74,n:GT77}, embedding complete graphs in surfaces of higher genus~\cite{k:Ringel74}, and for analyzing a model of local computations~\cite{n:Angluin80}.

Graph covering projections are also known as \emph{locally bijective homomorphisms} and as such they fall into a family of \emph{locally constrained homomorphisms}.
Other problems from this family are locally surjective and locally injective graph homomorphisms, where we ask for the existence of a homomorphism that is, respectively, surjective or injective in the neighborhood of each vertex.
Locally surjective homomorphisms play an important role in social sciences~\cite{n:FP05} (there this problem is called the Role Assignment Problem).
On the other hand, a prominent special case of the locally injective homomorphism problem is the well-studied $L(2,1)$-labeling problem~\cite{n:GriggsY92}, and, more generally, $H(p,q)$-coloring~\cite{n:FHKT03,n:KT00}

\smallskip
\noindent\emph{Computational complexity.}
The complexity of finding locally constrained homomorphisms was studied by many authors.
For locally surjective homomorphisms we know a complete dichotomy~\cite{n:FP05}. The problem is polynomial-time solvable if the target graph $H$ either (a) has no edge, or (b) has a component that consists of a single vertex with a loop, or (c) is simple and bipartite, with at least one component isomorphic to $K_2$. In all other cases the problem is NP-complete.

The dichotomy for locally injective homomorphisms is still unknown, despite some work~\cite{DBLP:conf/tamc/BilkaLT11,n:FK01,DBLP:journals/dam/FialaKP08,n:FK02,DBLP:conf/iwoca/LidickyT10}. However, we understand the complexity of the \emph{list} variant of the problem~\cite{n:FK06}: it is polynomial-time solvable if every component of the target graph has at most one cycle, and NP-complete otherwise. 

To the best of our knowledge, Abello et al.~\cite{n:AFS91} were the first to ask about the computational complexity of {\sc $H$-Cover}.
Note that in order to map a vertex of $G$ to a vertex of $H$, they must be of the same degree; a natural interesting special case is when $H$ is regular. It is known that for every $k \geq 3$, the \textsc{$H$-Cover} problem is NP-complete for every simple $k$-regular graph $H$~\cite{n:KPT97,n:FK08}. (For $k\le 2$, \textsc{$H$-Cover} is polynomial time solvable for $k$-regular $H$ - for $k=1$, the only cover of $K_2$ is $K_2$ itself, and for $k=2$, the cycle of length $t$ is covered only by cycles whose lengths are multiples of $t$.)
Some other partial results are known, mostly focusing on small graphs $H$~\cite{n:Fiala00b,n:KPT98,n:KratochvilTT16}.
Let us point out that in all the above results it was assumed that $H$ has no multiple edges, no loops and no semi-edges.

Recall further that there is also some more work concerning the complexity of locally surjective and injective homomorphisms if $G$ is assumed to come from some special class of graphs~\cite{n:BardBDMY18,n:BilkaJKTV11,n:ChaplickFHPT15,DBLP:journals/corr/abs-2202-12438,n:FialaKKN14,n:OkrasaR20}. We also refer the reader to the survey concerning various aspects of locally constrained homomorphisms~\cite{n:FK08}.

\smallskip
\noindent\emph{The role of semi-edges.}
Let us name just a few of the most significant examples of usage of semi-edges. Malni{\v{c}} et al.~\cite{n:MalnivcMP04} considered
semi-edges during their study of abelian covers to allow for a broader range of applications. 
Furthermore, the concept of graphs with semi-edges was introduced independently and
naturally in mathematical physics~\cite{getzler1998modular}.
It is also natural to consider semi-edges in the mentioned framework of local computations (we refer to the introductory section of~\cite{n:BFHJK-MFCS} for more details). Finally, a theorem of Leighton~\cite{n:Leighton82} on finite common covers has been recently generalized to the semi-edge setting in~\cite{arxiv1908.00830,woodhouse_2021}. To highlight a few other contributions, the reader is invited
to consult~\cite{n:MalnicNS00,n:NedelaS96}, the surveys~\cite{kwak2007graphs,nedela_mednykh}, and
finally for more recent results, the series of papers~\cite{n:FialaKKN14,n:FialaKKN18,arxiv1609.03013} and the introductions therein. 

The complexity study of \textsc{$H$-Cover} for graphs $H$ that allow semi-edges has been initiated only very recently in~\cite{n:BFHJK-MFCS,BFJKS21FCT}. We continue this line of research. In particular, our far-reaching goal is to prove the following conjecture.

\begin{conjecture}[Strong Dichotomy Conjecture] \label{conj:sdc}
For every $H$, the {\sc $H$-Cover} 
problem is either  polynomial-time solvable for general graphs on input, or NP-complete for simple input graphs.
\end{conjecture}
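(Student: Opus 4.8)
The plan is to bootstrap the Strong Dichotomy Conjecture from the list-covering hardness proved in this paper, in three stages; the guiding principle is that the tractable targets must be exactly those that are ``too symmetric'' for any of the reductions below to bite. \emph{Stage 1 (reduce to the regular case).} A covering projection sends each connected component of $G$ onto a component of $H$, so one may assume $H$ connected; moreover every cover respects the degree-refinement (equitable) partition of $H$, so the degrees occurring in $H$, and then their iterated refinements, induce a partition of the vertex set of any candidate cover $G$. For non-regular $H$ one processes the blocks of this partition one at a time, replacing each vertex of a fixed degree class by a small rigid local gadget, thereby turning \textsc{$H$-Cover} into a bounded number of instances of \textsc{$H'$-Cover} with $H'$ regular --- the standard device for transporting dichotomies from regular targets to general ones. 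If every $H'$ so produced falls under the hardness theorem of this paper (regular, valency $>2$, with a semi-simple vertex), then \textsc{$H$-Cover} is NP-complete; the rest of the plan handles the targets for which it does not.

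\emph{Stage 2 (simulate lists by selector gadgets).} The engine is the statement that \textsc{List-$H$-Cover} is NP-complete whenever $H$ is regular of valency greater than $2$ and contains a semi-simple vertex. To discharge the lists one constructs, for each vertex $x$ of $H$, a finite graph $S_x$ with a distinguished vertex $u_x$ such that every covering projection $S_x \to H$ maps $u_x$ onto $x$, while on a designated boundary edge of $S_x$ all locally admissible colourings are realised; attaching suitable copies of these gadgets to an input graph converts every vertex- and edge-list into genuine covering constraints, giving \textsc{List-$H$-Cover} $\le_p$ \textsc{$H$-Cover}. These gadgets must themselves be simple graphs (or be composable with a ``simplifying'' gadget) so that the reduction outputs simple instances, which is the strong form the conjecture demands. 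I expect the construction of these rigid selectors --- using the single semi-edge at a semi-simple vertex to kill the local automorphisms of $H$ --- to be the main obstacle: for the most symmetric targets (each vertex incident with a full complement of loops, parallel edges, or semi-edges) no rigid gadget can exist, and precisely those graphs must therefore be shown to lie on the polynomial side.

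\emph{Stage 3 (the polynomial island).} Finally one proves that the $H$ escaping Stages~1--2 are exactly those for which \textsc{$H$-Cover} is polynomial. For such $H$ a covering projection should decompose into a system of linear equations over a finite abelian group (coming from the cycle space of $H$) together with a matching/$2$-colouring condition on a tree-like skeleton, all solvable in polynomial time --- in the spirit of the list-locally-injective dichotomy of~\cite{n:FK06} and the locally surjective dichotomy of~\cite{n:FP05}. Combining the three stages yields the claimed gap-free dichotomy: either the degree-refinement pull-back of Stage~1 together with the gadget reduction of Stage~2 applies, and \textsc{$H$-Cover} is NP-complete already for simple inputs, or $H$ belongs to the structurally restricted family and the group-plus-matching algorithm solves it in polynomial time for arbitrary inputs. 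The two places that require genuine care are the interaction of semi-edges with the degree-refinement partition in Stage~1, and guaranteeing that every hardness reduction can be arranged to output a simple graph in Stage~2.
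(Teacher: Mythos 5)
The statement you are proving is labelled a \emph{conjecture}, and the paper does not prove it: the authors explicitly describe it as a ``far-reaching goal'' and offer only supporting evidence, namely NP-completeness of \textsc{List-$H$-Cover} for regular $H$ of valency at least $3$ with a semi-simple vertex (Theorem~\ref{thm:mainresult}) and a dichotomy for connected cubic $H$ (Theorem~\ref{thm:cubicdichotomy}). Your proposal is therefore not being measured against a proof in the paper but against the open problem itself, and as a proof it has genuine gaps at every stage. The decisive one is Stage~2: the claim that lists can always be discharged by rigid ``selector'' gadgets is precisely the part that is open, and the paper's own results show it cannot be taken for granted --- for $k$-rings with $k=2^{\alpha}$, $\alpha\ge 3$, and for sausages, the authors prove hardness only for the \emph{list} version and state in the concluding remarks that ``nontrivial lists are required to make our proof technique work.'' You acknowledge that rigid gadgets may fail for highly symmetric targets, but you then assume without argument that all such targets land on the polynomial side; the $2^{\alpha}$-rings are a concrete family where rigidity is problematic yet the problem is believed (and in the list version proved) to be hard, so your trichotomy ``rigid gadget exists / target is polynomial'' does not partition the cases correctly.

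Stages~1 and~3 are also unsupported. The known reduction from general simple targets to a core case (the paper cites~\cite{n:KPT97a}) produces covering problems for \emph{colored mixed multigraphs}, not plain regular graphs, and it is not established for targets with semi-edges; moreover Theorem~\ref{thm:mainresult} needs a semi-simple vertex, which regular graphs such as sausages lack, so even a successful reduction to regular targets would not let you invoke the paper's hardness theorem across the board. Stage~3 contains no characterization of the allegedly polynomial targets and no algorithm --- the ``linear equations over a finite abelian group plus matching'' heuristic is an analogy to other locally constrained homomorphism dichotomies, not an argument. In short, your text is a reasonable research programme that mirrors how the authors themselves frame the problem, but none of its three stages is carried out, and Stage~2 as stated contradicts the evidence in Sections~\ref{sec:sausages} and~\ref{sec:concl}.
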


\noindent\emph{Our results.}
The goal of this paper is to push further the understanding of the complexity of  \textsc{$H$-Cover}  for regular graphs. Recall that the problem is known to be NP-complete for every fixed $k$-regular \emph{simple} graph $H$ of valency $k\ge 3$~\cite{n:KPT97,n:Fiala00b,n:FK08}. 
Though it was known already from~\cite{n:KPT97a} that in order to fully understand the complexity of covering general simple graphs, it is necessary (and sufficient) to prove a complete characterization for colored mixed graphs with loops and multiple edges allowed (but no semi-edges), the result of~\cite{n:KPT97} was formulated and proved only for simple graphs.
In this paper we revisit the method developed in~\cite{n:KPT97} and we conclude that though it does not seem to work for graphs with multiple edges in general, it is possible to modify it and -- under certain assumptions -- prove NP-hardness for the \emph{list} variant of the problem, \textsc{List-$H$-Cover}, where the vertices and edges of the input graph are given with lists of admissible targets. (Note that for any graph $H$, \textsc{List-$H$-Cover} is at least as difficult as \textsc{$H$-Cover}, since the latter problem is a special instance of the former one with all lists full, i.e., containing all vertices or edges of the target graph $H$.)
One of our main results is the following theorem (a vertex is {\em semi-simple} if it belongs to no loops and no multiple edges, and is incident with  at most one semi-edge).

\begin{theorem}
\label{thm:mainresult}
Let $k \geq 3$ and let $H$ be a connected $k$-regular graph with at least one semi-simple vertex. Then {\sc List-$H$-Cover} is NP-complete for simple input graphs.
\end{theorem}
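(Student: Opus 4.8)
The plan is as follows. Membership in NP is immediate: a covering projection can be presented as a pair of mappings, one on vertices and one on edges, and one checks in polynomial time that it respects all the prescribed lists, preserves incidences, and is a bijection on the edge-neighbourhood of every vertex. The work lies in the NP-hardness, which I would prove by a polynomial reduction from an NP-complete constraint satisfaction problem --- following the gadget-based tradition of~\cite{n:KPT97}, a natural choice is a monotone variant of \textsc{Not-All-Equal $3$-Sat} (equivalently, a $2$-colouring problem on $3$-uniform hypergraphs), though a properly chosen edge-colouring problem would serve as well. The guiding idea is to exploit the lists to \emph{funnel} the whole instance through the one well-behaved place in $H$: we build $G$ so that, up to a bounded amount of wiring, every vertex is forced to map either to the semi-simple vertex $w$ or to a vertex at bounded distance from $w$, so that the only portion of $H$ whose combinatorics we must understand is the $k$-star at $w$ together with the edges leaving it. This is exactly where the semi-simplicity of $w$ enters: having no loops and no multiple edges at $w$, and at most one semi-edge, guarantees a clean local picture there, whereas the method of~\cite{n:KPT97} for simple graphs relied on such cleanliness everywhere.

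Next I would set down the structural facts that make the gadgets tick. If $f\colon G\to H$ is a covering projection and $e=xy$ is an ordinary edge of $H$, then the edges of $G$ mapped to $e$ form a perfect matching between the fibres $f^{-1}(x)$ and $f^{-1}(y)$; a loop or semi-edge at $x$ lifts to a union of short structures inside $f^{-1}(x)$. Hence, if the lists force two adjacent vertices $a,b$ of $G$ to map to $w$ and to a fixed neighbour $u_i$ of $w$ respectively, then the edge $ab$ is forced onto $wu_i$ and the cover induces a bijection between the $w$-fibre and the $u_i$-fibre. From this one obtains a \emph{propagation gadget}: a path in $G$ whose internal vertices carry singleton lists spelling out a walk $w,u_i,w,u_j,\dots$ in $H$, along which the identity of a vertex in a distinguished fibre is transported deterministically, and which can be ``turned around'' without leaving the $w$-fibre whenever $w$ carries a semi-edge (and otherwise by bouncing off a neighbour of $w$).

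Then I would assemble the two gadget types required by the reduction. A \emph{variable gadget} is a small subgraph of $G$ whose vertices carry singleton or two-element lists and whose covering projections to $H$ fall into exactly two classes, interpreted as \textsc{true} and \textsc{false}; the binary choice is realised at an interface vertex mapped to $w$, where --- once a few of the $k$ edges at $w$ have been reserved for interface connections --- a genuine choice of which remaining edge of $H$ plays which role survives precisely because $k\ge 3$ (for $k\le 2$ no such choice remains, consistently with the tractability of the low-valency cases). The same fact gives a \emph{copy/fan-out gadget}, broadcasting a variable's value to arbitrarily many occurrences by splitting at a $w$-vertex of degree at least three. Finally, a \emph{clause gadget} for a not-all-equal triple is attached to three interface vertices and is designed, using the latitude in the lists of its own edges, to admit a covering projection if and only if the three incoming truth values are not all equal, while singleton lists on its remaining parts exclude any unintended lift. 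The case in which $w$ is incident with a semi-edge and the case in which $w$ has $k$ ordinary edges to distinct neighbours are treated by essentially the same constructions, the semi-edge (when present) only simplifying the turn-around step.

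The crux --- and the step I expect to be the main obstacle --- is controlling a cover on the \emph{uncontrolled} part of $H$. A propagation path that leaves $w$ along $e_i=wu_i$ must return to a $w$-vertex, but from $u_i$ a cover might traverse loops, multiple edges and foreign vertices of $H$ before returning, and a priori distinct choices on such an excursion could scramble the transported value or let gadgets leak into each other. To neutralise this I would (i) put singleton lists on every edge of each propagation, variable and clause gadget except at the few designated choice points, thereby pinning the entire walk in $H$ that the gadget projects to and leaving no freedom inside an excursion; and (ii) prove a ``return lemma'': for any such pinned closed walk from $w$ to $w$ in $H$ --- no matter how complicated the part of $H$ it passes through --- the permutation of the relevant fibre that it induces is the intended one (a transposition or the identity on two marked elements). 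Establishing this lemma uniformly over all connected $k$-regular $H$ with a semi-simple vertex, regardless of the rest of $H$, is the technical heart of the argument; granting it, correctness of the variable, copy and clause gadgets, and hence the reduction from monotone \textsc{Nae-$3$-Sat}, follows by routine checking, completing the proof.
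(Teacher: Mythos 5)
There is a genuine gap, and it is exactly the step you flag as the ``technical heart'': the return lemma is never formulated precisely, let alone proved, and everything else in your plan (the variable, copy and clause gadgets) is described only by its intended behaviour rather than constructed. Worse, the propagation mechanism you propose is problematic in itself. In a simple $k$-regular input graph $G$, a vertex whose singleton list forces it onto a vertex $u_i$ of $H$ that carries a loop or a multiple edge cannot lie on a bare path: the covering condition forces the preimage of a loop to be a union of cycles spanning the fibre and the preimage of a multiple edge to be several disjoint matchings, so such a vertex must have several neighbours inside prescribed fibres. Hence ``pinning a walk in $H$ with singleton edge lists'' does not yield a gadget that admits \emph{any} covering projection unless one also certifies that the pinned partial data extends over the whole uncontrolled part of $H$ --- and that is nontrivial precisely when $H$ has loops and multiedges away from the semi-simple vertex $w$, which is the whole point of the theorem.

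For comparison, the paper resolves this differently. It first reduces the non-bipartite case to the bipartite one via $H\times K_2$ (semi-simple vertices of $H$ become simple vertices of the bipartite, hence $k$-edge-colorable, product). For bipartite $H$ it builds a \emph{multicover}: a colored product of many permuted copies of $H$ (made simple by a further product with $K_{k,k}$) with one vertex $u$ split into $k$ pendant vertices, giving a gadget $G_u$ that by construction admits a partial covering projection extending \emph{every} bijection of the pendant edges onto $E_H(x)$ for every $x$. The analogue of your return lemma is Proposition~\ref{prop:multicoverproperties}(b)--(c): in \emph{every} partial covering projection of $G_u$, all pendants land on one vertex and the pendant edges on $k$ distinct edges. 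This is proved by a global counting argument (pulling back an edge-colouring of $H$, orienting edges according to fibre sizes $h_a$, and a divisibility argument forcing $|L_r|=1$), not by local propagation. The reduction is then from rainbow colourability of $(k-1)$-uniform $k$-regular hypergraphs, with singleton lists only on vertices forced onto the simple vertex. Your outline is a reasonable research plan, but as a proof it assumes the one statement that actually needs proving.
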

We do believe that the Strong Dichotomy Conjecture holds true also for  \textsc{List-$H$-Cover}. \textcolor{blue}{Another observation in support of this conjecture is the case of regular graphs of lower valency.}

\textcolor{blue}{
\begin{theorem}
\label{thm:minorresult}
Let $k \le 2$ and let $H$ be a connected $k$-regular graph.
Then {\sc List-$H$-Cover} is polynomial-time solvable for arbitrary input graphs.
\end{theorem}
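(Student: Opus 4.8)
The plan is to exploit the extremely restricted structure of connected $k$-regular graphs when $k\le 2$. Two easy reductions come first. Since a covering projection preserves vertex degrees, if the input graph $G$ is not $k$-regular we reject immediately, so from now on assume $G$ is $k$-regular. Moreover, because $H$ is connected, the restriction of any covering projection $G\to H$ to a connected component $G'$ of $G$ is again a covering projection of $H$: local bijectivity is a condition on a vertex and its incident edge-ends only, and the set of vertices and edges of $H$ hit by $f|_{G'}$ is closed under taking neighbours and incident edges in $H$, hence equals all of $H$ as $H$ is connected. Conversely, covering projections of the components of $G$ glue to one of $G$. Hence \textsc{List-$H$-Cover} on $G$ decomposes into independent instances, one per component of $G$ with the inherited lists, and it suffices to decide connected inputs.

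Next, enumerate the possibilities. If $k=0$ then $H$ is a single vertex with no edge and a connected input is a single vertex; the instance is positive iff the vertex of $H$ lies in the corresponding list. If $k=1$, a connected $1$-regular graph is either $K_2$ or a single vertex bearing one semi-edge. If $k=2$, a connected $2$-regular graph is either a cycle $C_m$ with $m\ge 1$ (where $C_1$ is one vertex with a loop and $C_2$ a pair of parallel edges) or a \emph{bar} $B_m$ with $m\ge 1$, obtained from a path on $m$ vertices by attaching one semi-edge at each of its two endpoints ($B_1$ being a single vertex with two semi-edges). In particular $H$ has a bounded number of vertices and edges, and every $k$-regular graph, input or target, is a disjoint union of such elementary pieces.

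To decide a connected input $G'$, root it at a vertex $u_0$ (an endpoint if $G'$ is a bar) and propagate. For each vertex $x$ of $H$ in the list of $u_0$ and each of the at most two bijections between the edge-ends at $u_0$ and the edge-ends at $x$, extend the map deterministically: once the image of a vertex and of one of its (at most two) incident edge-ends are known, local bijectivity forces the image of the remaining edge-end, hence the image of the edge leaving along it, hence --- since a covering projection sends an edge-end at one end of an edge to an edge-end at the other end --- the image of the next vertex and the induced correspondence of its edge-ends. Walking along $G'$ this determines the entire map. A branch is discarded as soon as some vertex or edge receives an image outside its list; as soon as a semi-edge of $G'$ is forced onto a non-semi-edge of $H$ (recall covering projections send semi-edges to semi-edges); or, upon returning to $u_0$ in a cycle, or reaching the capping semi-edges of a bar, as soon as the forced images clash. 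We accept $G'$ iff some branch survives, and $G$ iff every component is accepted. There are only $O(|V(H)|)=O(1)$ branches per component, each costing linear time, so the algorithm is polynomial.

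Correctness is direct: every covering projection $G'\to H$ obeying the lists restricts to a labelling that one of the enumerated branches reconstructs step by step, so it is found; conversely a surviving branch is incidence-preserving and, by the way edge-ends were matched, locally bijective, hence a covering projection, and it respects all lists by construction. There is no genuine obstacle; the only care needed is definitional bookkeeping with semi-edges --- for instance that an ordinary edge of $G$ may be mapped onto a semi-edge of $H$ (sending both of its ends to the same vertex) whereas a semi-edge of $G$ can only go to a semi-edge --- together with verifying that the list of elementary $k$-regular shapes and the consistency conditions for cycles and bars are complete. The real content of the statement is simply that for $k\le 2$ the target, and therefore any graph that could possibly cover it, is rigid enough that a constant amount of guessing followed by forced propagation settles the question.
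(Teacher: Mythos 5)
Your proposal is correct and follows essentially the same route as the paper: both rest on the observation that for $k\le 2$ a covering projection of a connected component is rigid --- determined by the image of one vertex and one of at most two edge-end correspondences --- so one enumerates the $O(|V(H)|)$ candidates and checks each against the lists. The paper spells this out as an explicit case analysis over the possible shapes of $H$ (loop, two semi-edges, cycle, open path) while you phrase it as a uniform guess-and-propagate algorithm, but the content is the same.
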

}

The second main goal of the current paper is to show how Theorem~\ref{thm:mainresult} can be used to prove the Strong Dichotomy Conjecture for cubic graphs.
Recall that for the closely related locally injective homomorphism problem, introducing lists was helpful in obtaining the full complexity dichotomy~\cite{n:FK06}. For locally surjective homomorphisms, it follows straightforwardly from Theorem~3 of~\cite{n:FP05} that the list version is NP-complete whenever the target graph contains a connected component with at least two edges. On the other hand, if the connected components of the target graph have at most one edge each, the list version is still polynomial time solvable, one only has to check that all vertices of each component of the source graph are allowed to be mapped onto the vertices of the same component of the target one.

\begin{theorem}\label{thm:cubicdichotomy}
Let $H$ be a connected cubic graph. Then {\sc List-$H$-Cover} is polynomial-time solvable for  general graphs when $H$ has one vertex and one semi-edge, and
it is NP-complete (with respect to Turing reductions) even for simple input graphs otherwise.
\end{theorem}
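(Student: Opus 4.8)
The plan is to combine Theorem~\ref{thm:mainresult} with a short structural classification and a few targeted reductions. The first step is a structural lemma: a connected cubic graph $H$ with \emph{no} semi-simple vertex is one of very few graphs. Indeed, a loopless vertex of $H$ that is incident with at most one semi-edge and has pairwise distinct ordinary neighbours is semi-simple; excluding this forces both the set of single ordinary edges and the set of double (parallel) ordinary edges of $H$ to form matchings on $V(H)$, so their union is a single path or a single even cycle. Consequently $H$ is: (a) the one-vertex graph $W_1$ with a loop and a semi-edge; (b) the one-vertex graph $W_0$ with three semi-edges; (c) the two-vertex graph $\Theta$ with a triple edge; (d) a \emph{path-caterpillar}, i.e.\ a path whose edges are alternately single and double, each pendant vertex capped by a loop or two semi-edges when its edge is single and by one semi-edge when it is double; or (e) a \emph{cycle-caterpillar}, i.e.\ an even cycle whose edges are alternately single and double. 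If $H$ \emph{does} have a semi-simple vertex, Theorem~\ref{thm:mainresult} already gives NP-completeness for simple inputs; so it remains to deal with (a)--(e).

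Case (a) is the polynomial one. I would show that $G$ (with its lists) covers $W_1$ if and only if $G$ is cubic, each vertex of $G$ is incident with exactly one semi-edge whose list contains the semi-edge of $W_1$, every other edge-list contains the loop of $W_1$, and every vertex-list contains the vertex of $W_1$ --- because once the per-vertex semi-edges are removed, the remaining edges of $G$ form a $2$-regular multigraph, and \emph{every} $2$-regular multigraph covers a single loop. All these conditions are decidable in polynomial time.

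For the hardness side, the anchor is $W_0$: for a simple cubic input $G$, a covering projection to $W_0$ is nothing but a proper $3$-edge-colouring of $G$ (the three semi-edges of $W_0$ being the colours), so $W_0$-\textsc{Cover}, and hence \textsc{List-$W_0$-Cover}, is NP-complete by Holyer's theorem. For $\Theta$, a covering projection from $G$ forces $G$ to be bipartite (its classes mapping to the two vertices of $\Theta$) and then amounts to a proper $3$-edge-colouring of $G$ in which each edge additionally records which of the three parallel edges of $\Theta$ it uses; prescribing these records on selected edges through lists shows that \textsc{List-$\Theta$-Cover} contains $3$-edge-precolouring extension for cubic bipartite graphs, which is NP-complete (alternatively one can reduce from \textsc{List-$H'$-Cover} for a cubic $H'$ with a semi-simple vertex). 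This is also where lists are indispensable: $\Theta$-\textsc{Cover}, and several of the caterpillar covering problems, are polynomial-time solvable for general inputs --- in accordance with the (still open) Strong Dichotomy Conjecture for plain covers --- so only the list versions are hard.

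The remaining and main difficulty is the caterpillars (d)--(e). The crucial point is that the spine of a caterpillar is rigid in every cover: a single spine-edge forces a perfect matching between the corresponding preimage classes, a double spine-edge forces a $2$-regular bipartite graph between them, and there are no other edges; hence a cover of a caterpillar $H$ partitions a (simple) input $G$ into equal-size layers joined in a fixed pattern of matchings and $2$-regular bipartite blocks. The only freedom left is how each double-edge block is $2$-edge-coloured and how each cap is covered: a loop-cap places no constraint on its layer, whereas a two-semi-edge cap forces the internal structure of that layer to be a disjoint union of even cycles (a \emph{monochromatic} $2$-edge-colouring). I would then select one distinguished cap or block of $H$ and encode into it a hard instance --- $3$-edge-colourability / \textsc{List-$W_0$-Cover} when $H$ possesses a two-semi-edge or loop cap, and \textsc{List-$\Theta$-Cover} for the cycle-caterpillars --- using the rigidity of the rest of the spine to carry the encoded constraints intact to that location. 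Since the $2$-edge-colourings of the remaining double-edge blocks and the cycle structure inside loop-caps are not uniquely determined, a single many-one reduction cannot pin them down; instead I would use a \textbf{Turing} (indeed disjunctive) reduction, calling the \textsc{List-$H$-Cover} oracle on a polynomially bounded family of instances in which these auxiliary choices are fixed one configuration at a time. Engineering the gadgets so that none of this residual slack can be exploited to defeat the encoded constraint is the delicate part, and where essentially all the effort lies.
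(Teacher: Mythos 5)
Your overall skeleton---invoke Theorem~\ref{thm:mainresult} whenever $H$ has a semi-simple vertex, classify the remaining connected cubic graphs, and treat the small graphs and the two caterpillar families separately---is essentially the paper's decomposition (your path- and cycle-caterpillars are the paper's $k$-sausages and $k$-rings). But there are two genuine problems. First, your polynomial case (a) is wrong as stated. By the definition of a covering projection, the preimage of the semi-edge of $W_1$ is a disjoint union of semi-edges \emph{and ordinary edges} spanning $V(G)$; it need not consist of semi-edges of $G$ at all. Thus $K_4$ covers $W_1$ (map a perfect matching to the semi-edge and the complementary $4$-cycle to the loop), yet your criterion ``each vertex of $G$ is incident with exactly one semi-edge'' rejects it. The correct algorithm must decide whether $G$ has a spanning subgraph of maximum degree one consisting of semi-edges plus a matching, i.e.\ a perfect matching computation, and with lists one must first propagate the forced assignments (edges whose lists exclude the loop, vertices carrying semi-edges, etc.) before running the matching algorithm, as the paper does.

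Second, and more seriously, the hardness of the caterpillar cases---which is where almost all of the work in the theorem lies---is not actually proved; your sketch does not engage with the real obstructions. ``Encoding a hard instance into one distinguished cap or block'' does not account for the fact that which source problem works depends on the arithmetic of the ring length $k$: the paper reduces from $C_{2\beta+3}$-\textsc{Hom} when $k$ has an odd divisor $\ge 3$, must switch to \textsc{List-$C_k$-Hom} (NP-complete for $k\ge 5$ via the bi-arc classification) when $k=2^{\alpha}$, $\alpha\ge 3$, because plain $C_k$-\textsc{Hom} is trivial for even $k$, and needs a separate ad hoc reduction from \textsc{4-Coloring} for the $4$-ring; each of these requires explicit vertex, edge and enforcing gadgets whose rigidity has to be verified. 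Your proposal to absorb the ``residual slack'' by a disjunctive Turing reduction over instances ``fixed one configuration at a time'' is also not viable as described: the number of $2$-edge-colourings of the double-edge blocks and of cycle covers inside loop-caps is exponential, so there is no polynomially bounded family to enumerate. The only place a Turing reduction is actually needed (and suffices) is the transfer from rings to sausages via $S_k\times K_2\cong R_k$, with two oracle calls for the two orientations of the bipartition; proving the ring cases first and then transferring is the step your outline is missing.
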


The paper is organized as follows. In Section~\ref{sec:prelim} we present formal definitions of graphs and graph covers, and of further notions used later in the paper. In Section~\ref{sec:mainresult}, we prove Theorem~\ref{thm:mainresult} in two steps -- first for bipartite graphs $H$, and then for arbitrary ones. Results on the computational complexity of the {\sc $H$-Cover} and {\sc List-$H$-Cover} problems for several special cubic graphs, which are interesting on their own, are proved in Section~\ref{sec:sausages}, and the Strong Dichotomy for regular subcubic graphs is proved in Section~\ref{sec:cubic}. Final remarks are collected in Section~\ref{sec:concl}.    

\section{Preliminaries}\label{sec:prelim}

In this section we present formal definitions of graphs and graph covering projections, as well as some auxiliary notions used later in the paper.

\begin{definition}\label{def:graph}
A {\em graph} is a triple $G=(V,\Lambda,\iota)$ where $V$ is its set of {\em vertices}, $\Lambda=E\cup L\cup S$ is its set of {\em edges} and $\iota:\Lambda\to V\cup {V\choose{2}}$ is the incidence function. Here $E$ is the set of {\em ordinary edges} and $\iota(e)\in {V\choose{2}}$ for every $e\in E$, $L$ is the set of {\em loops} and $S$ is the set of {\em semi-edges} with $\iota(e)\in V$ for every $e\in L\cup S$.   
For a graph $G=(V,\Lambda,\iota)$, by $V(G)$ we denote the set $V$. Similarly we use $\Lambda(G)$ and $E(G)$).
\end{definition}

We say that a vertex $u$ is {\em incident with} an edge $e$ (and vice-versa) if $v\in \iota(e)$ or $v = \iota(e)$. Given a graph $G$ and a vertex $u\in V(G)$, the set of edges of $G$ incident with $u$ will be denoted by $\Lambda_G(u)$. The difference between loops and semi-edges lies in their contributions to the degrees of the vertices they are incident with. Informally, while an ordinary edge has two distinct end-vertices, a loop has two end-vertices which are identical, and a semi-edge has just one end-vertex.  
The \emph{degree} (or \emph{valency}) of a vertex $u$ is the number of edge end-vertices equal to $u$.
In particular, each ordinary edge and each semi-edge contribute 1 to the degree of each of its vertices, and each loop contributes 2.
A graph is {\em $k$-regular} if all of its  vertices have the same degree $k$.
We further say that:
\begin{compactitem}
\item a vertex is {\em semi-simple} if it belongs to no loops, no multiple edges and at most one semi-edge,
\item a graph is {\em semi-simple} if each of its vertices is semi-simple,
\item a vertex is {\em simple} if it is semi-simple and is incident with no semi-edges,
\item a graph is {\em simple} if each of its vertices is simple,
\item a graph is {\em bipartite} if it has no loops, no semi-edges, and no odd cycles.
\end{compactitem}

\begin{definition}\label{def:cover}
Given graphs $G$ and $H$, a mapping $f:V(G)\cup \Lambda(G)\longrightarrow V(H)\cup \Lambda(H)$ is a {\em graph covering projection} if vertices of $G$ are mapped onto vertices of $H$, edges of $G$ are mapped onto edges of $H$ so that incidences are retained, and in such a way that the preimage of a loop is a disjoint union of cycles spanning the preimage of the vertex incident with the loop (note that a loop itself is a cycle of length 1), the preimage of a semi-edge is a disjoint union of semi-edges and ordinary edges spanning the preimage of the vertex incident with this semi-edge, and the preimage of an ordinary edge is a matching spanning the preimage of the two vertices incident with this edge.
\end{definition}

The computational problem of deciding whether an input graph $G$ covers a fixed graph $H$ is denoted by {\sc $H$-Cover}.

The mapping  $f:V(G)\cup \Lambda(G)\longrightarrow V(H)\cup \Lambda(H)$ is a {\em partial covering projection} when the preimages are not required to be spanning subgraphs, but all other properties are fulfilled. In other words, the vertex- and edge-mappings are both surjective and the incidences are retained, the preimage of an ordinary edge connecting vertices say $u$ and $v$ is a matching consisting of edges each connecting a vertex from $f^{-1}(u)$ to a vertex from $f^{-1}(v)$, the preimage of a semi-edge incident with vertex say $u$ is a disjoint union of semi-edges and ordinary edges all incident only with vertices from  $f^{-1}(u)$, and the preimage of a loop incident with a vertex say $u$ is a disjoint union of cycles (including loops) and paths whose all edges are incident only with vertices from $f^{-1}(u)$.

In the {\sc List-$H$-Cover} problem the input graph $G$ is given with lists ${\cal L}=\{L_u,L_e: u\in V(G), e\in \Lambda(G)\}$, such that $L_u\subseteq V(H)$ for every $u\in V(G)$ and $L_e \subseteq \Lambda(H)$ for every $e\in E(G)$. A covering projection $f:G\longrightarrow H$ {\em respects} the lists of $\cal L$ if $f(u)\in L_u$ for every $u\in V(G)$ and $f(e)\in L_e$ for every $e\in \Lambda(G)$. 

\textcolor{red}{
\section{Complexity of {\sc List-$H$-Cover} for graphs with semi-simple vertices}}\label{sec:mainresult}

\textcolor{red}{The goal of this section is to prove Theorem~\ref{thm:mainresult}. The {\sc List-$H$-Cover} problem clearly belongs to the class NP (it is sufficient to guess a mapping $f:V(G)\cup \Lambda(G)\to V(H)\cup \Lambda(H)$, checking that it is a covering projection can clearly be done in time polynomial in the size of the input graph $G$). Thus our concern is only to show the NP-hardness of the problem. 
In the first two subsections we will prove it for the case when $H$ is bipartite (and hence does not contain loops nor semi-edges). By the celebrated K\"onig-Hall theorem, such a $k$-regular graph is $k$-edge-colorable. In fact, in the auxiliary constructions presented in Subsection~\ref{subsec:multicover}, we only need the assumption that $H$ is $k$-regular and $k$-edge-colorable. In Subsection~\ref{subsec:reduction}, we revisit the reduction from $k$-edge-colorability of $(k-1)$-uniform $k$-regular hypergraphs to the {\sc $H$-Cover} problem presented in~\cite{n:KPT97} and show how it can be used for bipartite graphs $H$ with multiedges. Finally, in Subsection~\ref{subsec:nonbipartite}, we prove the NP-hardness part of Theorem~\ref{thm:mainresult} for non-bipartite $H$.}

\subsection{Multicovers}\label{subsec:multicover}

The following operation will be an important tool used for building gadgets in our NP-hardness proof. 
\textcolor{red}{In this and the next subsection we only consider graphs without loops or multiple edges, i.e., graphs with only ordinary edges. For such a graph $G$, $\Lambda(G)=E(G)$, and in this sense we also write $E_G(u)=\Lambda_G(u)$ for the set of (ordinary) edges incident with a vertex $u$.}

\begin{definition}[Colored product]\label{def:coloredproduct}
\begin{compactenum}
\item Let $M_1,M_2, \ldots, M_m$ be $m$ perfect matchings, possibly on different  vertex sets. Their product is the graph 
$$\prod_{i=1}^m M_i=(\prod_{i=1}^m V(M_i),\{uv:u_iv_i\in M_i\mbox{ for each }i=1,2,\ldots,m\})$$
where it is assumed that the notation of the vertices of the product is such that $u=(u_1,u_2,\ldots,u_m)$ with $u_i\in V(M_i)$ for all $i=1,2,\ldots,m$. Then $\prod_{i=1}^m M_i$ is a perfect matching as well.
\item Let $G_1,G_2,\ldots,G_m$ be $k$-regular $k$-edge-colorable graphs without loops or semi-edges. For each $i=1,\ldots,m$, let $\phi_i:E(G_i)\longrightarrow\{1,2,\ldots,k\}$ be a proper edge-coloring of $G_i$. The {\em colored product} of $G_i$'s is the graph  $\prod_{i=1}^m G_i$ with vertex set being $\prod_{i=1}^m V(G_i)$ and edge set being the union of $E(\prod_{i=1}^m M^j_i)$, $j=1,2,\ldots,k$, where for each $i$ and $j$, $M^j_i=\phi_i^{-1}(j)$ is the perfect matching in $G_i$ formed by edges colored by color $j$ in the coloring $\phi_i$. If we define $\phi:E(\prod_{i=1}^m G_i)\longrightarrow\{1,2,\ldots,k\}$ by setting $\phi(e)=j$ if and only if $e\in E(\prod_{i=1}^m M^j_i)$, we see that $\phi$ is a proper $k$-edge-coloring of $\prod_{i=1}^m G_i$.
\item Let $G_1,G_2,\ldots,G_m$ and $\phi_i:E(G_i)\longrightarrow\{1,2,\ldots,k\}$ be as in 2) above. For each $i=1,2,\ldots,m$, define the projection $\pi_i$ from the colored product $\prod_{i=1}^m G_i$ to its $i$-th coordinate by setting
$\pi_i(u)=u_i$ and $\pi_i(e)=e_i$ for such an edge $e_i\in E(G_i)$ that satisfies $\phi_i(e_i)=\phi(e)$ and whose end-vertices are $u_i$ and $v_i$, provided the end-vertices of $e$ are $u$ and $v$.
\end{compactenum}
\end{definition}

An example of the colored product of two graphs  is depicted in Figure~\ref{fig:colored_product}. The next lemma follows immediately from Definition~\ref{def:coloredproduct}.

\begin{lemma}\label{lem:coloredproduct}
Let $G_1,G_2,\ldots,G_m$ be $k$-regular graphs without loops or semi-edges, and let for each $i=1,\ldots,m$, $\phi_i:E(G_i)\longrightarrow\{1,2,\ldots,k\}$ be a proper edge-coloring of $G_i$. 
Then each projection $\pi_i$, $i=1,2,\ldots,m$, is a covering projection from $\prod_{i=1}^m G_i$ onto $G_i$. \qed
\end{lemma}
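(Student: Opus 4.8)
The plan is to verify directly the two defining properties of a covering projection between graphs with only ordinary edges: that $\pi_i$ is an incidence-preserving surjection on vertices and edges, and that it restricts to a bijection on the edge-neighborhood of every vertex. Since Definition~\ref{def:coloredproduct} has already equipped both $\prod_{\ell=1}^m G_\ell$ and each $G_i$ with proper $k$-edge-colorings $\phi$, $\phi_i$ that are compatible with $\pi_i$ (namely $\phi_i(\pi_i(e))=\phi(e)$), the colors will do essentially all of the bookkeeping.

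First I would check that $\pi_i$ is well-defined and preserves incidences. If $e$ is an edge of the product with endpoints $u=(u_1,\dots,u_m)$, $v=(v_1,\dots,v_m)$ and color $j=\phi(e)$, then $e\in E(\prod_\ell M^j_\ell)$, so by the definition of a product of matchings $u_\ell v_\ell\in M^j_\ell=\phi_\ell^{-1}(j)$ for every $\ell$; in particular $u_i v_i$ is a genuine edge of $G_i$ of color $j$ with endpoints $u_i,v_i$, so $\pi_i(e)$ is well-defined and incident with $\pi_i(u)=u_i$ and $\pi_i(v)=v_i$. Surjectivity on vertices is immediate since $\pi_i^{-1}(a)=\{a\}\times\prod_{\ell\ne i}V(G_\ell)\ne\emptyset$, and surjectivity on edges follows from the local surjectivity established below.

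The main step is local bijectivity. Fix $u=(u_1,\dots,u_m)$ in the product. Because each color class $M^j_\ell$ is a perfect matching, $\prod_\ell M^j_\ell$ is again a perfect matching (part~1 of Definition~\ref{def:coloredproduct}), so $u$ is incident with exactly one edge of each color $1,\dots,k$, giving $\deg_{\prod_\ell G_\ell}(u)=k=\deg_{G_i}(u_i)$. Two distinct edges incident with $u$ must have distinct colors (two edges of color $j$ at $u$ would both lie in the perfect matching $\prod_\ell M^j_\ell$), and $\pi_i$ preserves colors, so their images are distinct edges of $G_i$ incident with $u_i$; hence $\pi_i$ is injective on the edge-neighborhood of $u$, and an injection between the two $k$-element edge-neighborhoods of $u$ and $u_i$ is a bijection. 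This also yields surjectivity of $\pi_i$ on edges and the spanning-matching structure required by Definition~\ref{def:cover}: for $f\in E(G_i)$ with endpoints $a,b$, each vertex of $\pi_i^{-1}(a)$ is incident with exactly one edge mapping to $f$, so $\pi_i^{-1}(f)$ is a perfect matching between $\pi_i^{-1}(a)$ and $\pi_i^{-1}(b)$.

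I expect no genuine obstacle here: the lemma is essentially a restatement of the fact that the colored product is assembled color class by color class out of products of perfect matchings. The only point needing a little care is phrasing the edge-neighborhood bijection so that it goes through without assuming the $G_i$ connected, and this is handled by the degree count together with injectivity rather than by any connectivity argument.
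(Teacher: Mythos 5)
Your proof is correct and is exactly the unpacking of what the paper leaves implicit: the paper offers no written proof, stating only that the lemma ``follows immediately from Definition~\ref{def:coloredproduct},'' and your color-class/perfect-matching bookkeeping is precisely the intended immediate argument. Nothing further is needed.
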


Now we will show the main construction used to build the gadgets for our NP-hardness reduction.

\begin{proposition}
\label{prop:multicoverexistence}
Let $H$ be a connected $k$-regular $k$-edge-colorable graph with no loops or semi-edges.  Then there exists a connected simple $k$-regular $k$-edge-colorable graph $G$ and a vertex $u \in V(G)$ such that 
for every vertex $x\in V(H)$ and for any bijection from $E_G(u)$ onto $E_H(x)$, there exists a covering projection from $G$ to $H$ which extends this bijection and maps $u$ to $x$.
\end{proposition}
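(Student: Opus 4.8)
The plan is to realize $G$ as a connected component of an iterated coloured product of many copies of $H$ — one for each pair consisting of a vertex $x$ of $H$ and a bijection of the $k$ edges at $x$ — together with one auxiliary factor thrown in to enforce simplicity, and to read off the required covering projections from the coordinate projections $\pi_i$ of Lemma~\ref{lem:coloredproduct}. First I would record an auxiliary fact: every connected $k$-regular $k$-edge-colourable graph with only ordinary edges has a \emph{simple} connected $k$-regular $k$-edge-colourable cover $\widehat H$ (and if $H$ itself is simple, we just take $\widehat H=H$). This is standard via a cyclic voltage cover of $H$ over $\mathbb{Z}_q$ for a sufficiently large prime $q$: assign voltage $0$ to the edges of a fixed spanning tree and pairwise distinct nonzero voltages to the remaining edges; for $q$ large enough no two of the resulting edges are parallel, the cover is loopless because $H$ is, connected because $q$ is prime, $k$-regular because covers preserve degrees, and it inherits a proper $k$-edge-colouring by pulling one back from $H$.

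Now fix a proper $k$-edge-colouring $\psi_0$ of $H$. For every vertex $x\in V(H)$ and every bijection $b\colon\{1,\dots,k\}\to E_H(x)$ (there are $|V(H)|\cdot k!$ such pairs, as $H$ is $k$-regular), let $\sigma$ be the permutation of $\{1,\dots,k\}$ determined by $\sigma(\psi_0(b(j)))=j$ and set $\phi_{x,b}:=\sigma\circ\psi_0$; this is again a proper $k$-edge-colouring of $H$, and at the vertex $x$ the edge $b(j)$ receives $\phi_{x,b}$-colour $j$ for each $j$. Let $H_{x,b}$ be the copy of $H$ carrying the colouring $\phi_{x,b}$, and form the coloured product $P:=\widehat H\times\prod_{(x,b)}H_{x,b}$, with $\widehat H$ carrying its inherited colouring. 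By Definition~\ref{def:coloredproduct}, $P$ is $k$-regular and $k$-edge-colourable with a product colouring $\phi$. Moreover $P$ is simple: it is loopless because all factors are, and two distinct vertices of a coloured product are joined by at most one edge of each colour, while here all edges between a fixed pair of vertices must carry one common colour, namely the colour of the unique edge joining their $\widehat H$-coordinates (unique since $\widehat H$ is simple). Let $u$ be the vertex of $P$ whose $\widehat H$-coordinate is an arbitrary fixed vertex and whose $(x,b)$-coordinate equals $x$ for every pair $(x,b)$, and let $G$ be the connected component of $P$ containing $u$. Then $G$ is a connected simple $k$-regular $k$-edge-colourable graph, and restricting any coordinate projection of $P$ to $G$ still yields a covering projection of $G$ onto the corresponding (connected) factor — the restriction is locally bijective exactly as in Lemma~\ref{lem:coloredproduct} and is onto by path lifting.

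It remains to verify the pointed property. Let $x\in V(H)$ and let $\beta\colon E_G(u)\to E_H(x)$ be any bijection. Identifying $E_G(u)$ with $\{1,\dots,k\}$ via the product colouring $\phi$ (each colour occurs once at $u$) turns $\beta$ into a bijection $b\colon\{1,\dots,k\}\to E_H(x)$ with $b(j)=\beta(\text{the }\phi\text{-colour-}j\text{ edge at }u)$. Take $f$ to be the projection of $G$ onto the factor $H_{x,b}$. By Lemma~\ref{lem:coloredproduct} together with the restriction remark above, $f$ is a covering projection of $G$ onto $H_{x,b}=H$ with $f(u)=x$; and by the definition of the projection in Definition~\ref{def:coloredproduct}, $f$ sends the $\phi$-colour-$j$ edge at $u$ to the $\phi_{x,b}$-colour-$j$ edge incident with $x$, which by the choice of $\phi_{x,b}$ is precisely $b(j)=\beta(\text{the }\phi\text{-colour-}j\text{ edge at }u)$. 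Hence $f$ maps $u$ to $x$ and extends $\beta$, as required.

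The only genuinely non-routine point is the simplicity requirement: when $H$ has parallel edges, a coloured product of copies of $H$ need not be simple, which is exactly what the auxiliary simple cover $\widehat H$ repairs (and why one first needs the standard fact that such a cover exists). Everything else is a direct unpacking of Definition~\ref{def:coloredproduct} and Lemma~\ref{lem:coloredproduct}.
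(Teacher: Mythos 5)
Your proof is correct and follows essentially the same route as the paper: a coloured product of $|V(H)|\cdot k!$ recoloured copies of $H$ (one per pair of a vertex and a bijection of its edge set onto the colour classes), an extra simple factor to kill parallel edges, the diagonal vertex $u$, the component containing it, and the coordinate projections of Lemma~\ref{lem:coloredproduct}. The only deviation is in how the simple factor is obtained -- the paper just multiplies with $K_{k,k}$, whereas you construct a simple voltage cover $\widehat H$ of $H$; this is correct but uses more machinery than necessary.
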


\begin{figure}
\centering
{\includegraphics[width=0.9\textwidth]{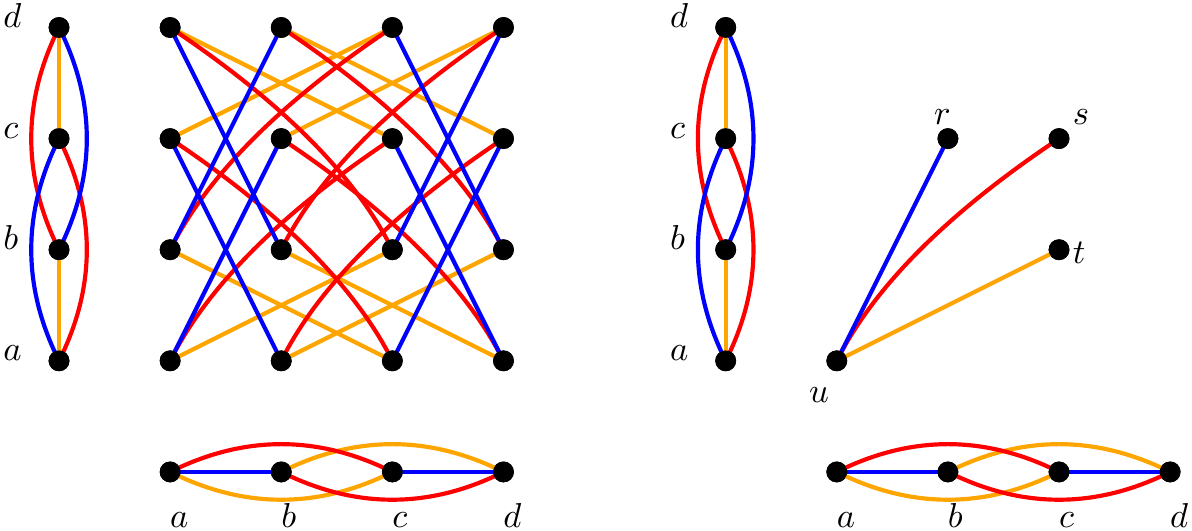}}
\caption{An example of the colored product is on the left. Projections of vertex $u$ and its neighborhood are visualized on the right -- $u,r,s,t$ map in this order onto $a,b,c,c$ in the vertical projection, and onto $a,c,c,b$ in the horizontal one.}
\label{fig:colored_product}
\end{figure}

\begin{proof}
In a way similar to the proof in \cite{n:KPT97} we first construct a colored product of many copies of $H$ that covers $H$ in many ways. 
For every vertex $x_i\in V(H)$, we take $k!$ copies of $H$ with edge colorings obtained by all permutations of colors, and in their colored product denote by  $a_i$ the vertex whose all projections are $x_i$, for each $x_i\in V(H)$. The edges incident with $a_i$ are projected onto $E_H(x_i)$ in all possible ways from this colored product. Then we take $n=|V(H)|$ copies of this product, set $G$ to be their colored product and set $u=(a_1,a_2,\ldots,a_n)$. It follows from Lemma~\ref{lem:coloredproduct} that all projections are covering projections onto $H$, while in the first $k!$ of them, $u$ is projected onto $x_1$, in the second group onto $x_2$, etc., for each $i$ in all possible ways concerning the bijection of $E_G(u)$ onto $E_H(x_i)$. 

 The difference to the approach in \cite{n:KPT97} is that it is not sufficient to require that all bijections of the vertex neighborhoods of $u$ and $x$  can be extended to covering projections, but we must aim at extending bijections of the sets of incident edges. 
 
 The product that we construct may still contain multiple edges. If this is the case, we further take the product with a simple $k$-regular $k$-edge-colorable graph, say $K_{k,k}$. This product is already a simple graph and still possesses all the requested covering projections. It may still be disconnected, though, and we denote by $G$ the component that contains the vitally important vertex $u$ in such a case. 
\qed
\end{proof}

The key building block of our NP-hardness reduction will be the graph $G_u$ obtained from $G$ by splitting vertex $u$ into $k$ pendant vertices of degree 1 (see Figure~\ref{fig:splitting_vertex2G_u}). For each edge $e$ of $G$ incident with $u$, we formally keep this edge with the same name in $G_u$ and denote its pendant vertex of degree 1 by $u_e$. 
(Thus, with this slight abuse of notation, $E_G(u)=\bigcup_{e\in E_G(u)}E_{G_u}(u_e)$.) 
Then we have the following proposition.

\begin{figure}
\centering
\includegraphics[width=0.8\textwidth]{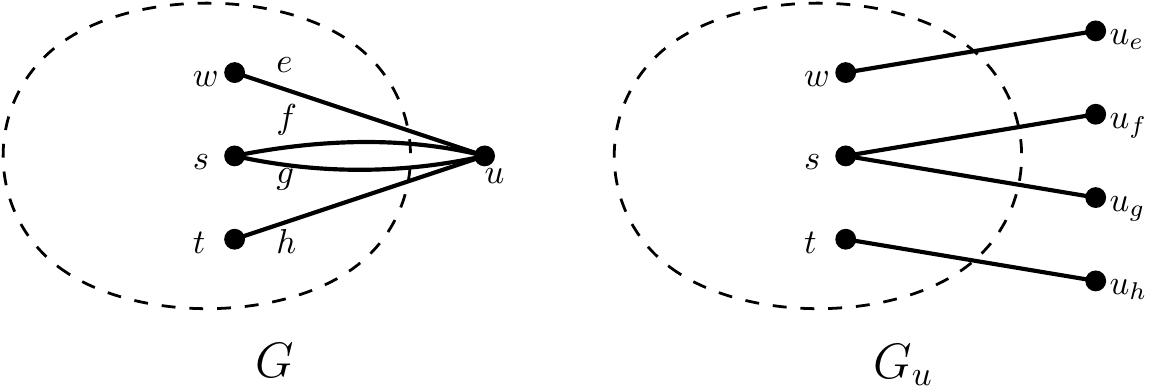}
\caption{An illustration to the construction of $G_u$.}
\label{fig:splitting_vertex2G_u}
\end{figure}

\begin{proposition}
\label{prop:multicoverproperties} 
\textcolor{red}{Let $H$ be a connected $k$-regular $k$-edge-colorable graph with no loops or semi-edges. Then} the graph $G_u$ constructed from the multicover $G$ of $H$ as above satisfies the following:
\begin{compactenum}[(a)]
\item for every vertex $x\in V(H)$ and every bijection $\sigma_x\colon E_G(u) \to E_H(x)$, there exists a partial covering projection of $G_u$ onto $H$ that extends $\sigma_x$ and maps each $u_e, e\in E_G(u)$ to $x$;
\item in every partial covering projection from $G_u$ onto $H$, the pendant vertices $u_e, e\in E_G(u)$ are mapped onto the same vertex of $H$;
\item in every partial covering projection from $G_u$ onto $H$, the pendant edges are mapped onto different edges (incident with the image of the pendant vertices).
\end{compactenum}
\end{proposition}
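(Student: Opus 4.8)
The plan is to derive all three properties of $G_u$ essentially as corollaries of Proposition~\ref{prop:multicoverexistence} together with the observation that a partial covering projection of $G_u$ corresponds exactly to a covering projection of $G$ after re-identifying the pendant vertices.

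\medskip

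\noindent\textbf{Part (a).} Given $x\in V(H)$ and a bijection $\sigma_x\colon E_G(u)\to E_H(x)$, Proposition~\ref{prop:multicoverexistence} supplies a covering projection $f$ from $G$ to $H$ with $f(u)=x$ and $f|_{E_G(u)}=\sigma_x$. Now I would simply transport $f$ to $G_u$: every vertex and edge of $G_u$ except the pendant vertices $u_e$ is a vertex/edge of $G$, so define $f'$ to agree with $f$ there, and set $f'(u_e)=x=f(u)$ for every $e\in E_G(u)$. Each non-pendant vertex of $G_u$ has the same incident edges as in $G$, so the local bijectivity at those vertices is inherited from $f$. The pendant vertices $u_e$ have degree $1$ in $G_u$, and the single incident edge $e$ is mapped by $f'$ to $\sigma_x(e)\in E_H(x)=E_H(f'(u_e))$; since a partial covering projection only requires the local map at each vertex to be injective (not onto), degree-$1$ vertices impose only this trivial condition. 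Hence $f'$ is a partial covering projection extending $\sigma_x$ and sending each $u_e$ to $x$. The edge-preimage conditions (matchings over semi-edges, etc.) are vacuous here because $H$ has only ordinary edges and the relevant preimages are inherited from $f$ with the pendant edges added as isolated matching edges.

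\medskip

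\noindent\textbf{Parts (b) and (c).} For the converse direction, let $g$ be an arbitrary partial covering projection from $G_u$ to $H$. The crucial point is that $G$ was obtained from $G_u$ by identifying all the pendant vertices $u_e$ into a single vertex $u$, and this identification is legitimate (i.e.\ $g$ descends to a covering projection of $G$) precisely when all the $u_e$ have the same $g$-image \emph{and} the pendant edges receive distinct images. So I would argue (b) first: suppose $u_{e}$ and $u_{e'}$ with $e\ne e'$ had $g(u_e)\ne g(u_{e'})$. I want to reach a contradiction using connectivity of $G$ and the fact that $G_u$ was built as a large colored product (a multicover). Concretely, $G_u$ minus its pendant vertices is connected (it is the component $G$ minus a single vertex $u$; one should note $G-u$ is still connected, which follows from the product structure — each coordinate copy of $H$ is connected and $H$ has $\ge 3$ edges at $u$ so $G$ is $2$-connected enough, or more carefully: a covering projection of a connected graph onto a connected $H$ forces the preimage to ``wrap around'', so the images of the $u_e$ are constrained). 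The cleanest route: walk along edges from $u_e$ into $G$; because $g$ restricted to $G-\{u_e: e\in E_G(u)\}$ is a partial covering projection of a connected graph into $H$, and the neighbor of $u_e$ along $e$ maps to a neighbor of $g(u_e)$, all the pendant vertices are forced to the same vertex. Then (c) follows: the $k$ pendant edges are incident to a common vertex image $x$ in $H$; a partial covering projection is injective on the edges at each vertex of $G_u$, but the pendant vertices have degree $1$, so this does not immediately give it — instead I would use that in $G$ the $k$ edges $E_G(u)$ are the full edge-neighborhood of $u$ and $g$ (pulled back to $G$) must be locally bijective at $u$, forcing the $k$ pendant edges to the $k$ distinct edges of $E_H(x)$.

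\medskip

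\noindent\textbf{Main obstacle.} The delicate step is \textbf{(b)}: ruling out a ``folded'' partial covering projection in which the pendant vertices land on several different vertices of $H$. The statement is false for a general connected graph with $k$ degree-$1$ vertices attached, so the proof must genuinely use that $G$ is the specific multicover from Proposition~\ref{prop:multicoverexistence}, in particular that $G-u$ is connected and that covering $H$ forces a rigid ``development'' of the graph. I expect the argument to go: take the connected graph $G' := G_u$ with the pendant vertices removed; since $g|_{G'}$ together with each pendant edge $e$ says that the $G'$-endpoint $v_e$ of $e$ maps to a vertex adjacent to $g(u_e)$ via the edge $g(e)$, and since in $G$ all of $v_e$'s... — one needs that following the unique edge from each pendant vertex leads into the same ``sheet.'' I would make this precise by invoking that $G$ is connected and $k$-regular and any partial cover of $H$ by $G_u$ extends (via part (a)-type reasoning applied in reverse on $G'$) to a genuine cover of a graph in which the $u_e$ are identified — hence they must have had equal images to begin with. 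If a fully self-contained argument for (b) proves fussy, the fallback is to cite the analogous lemma from~\cite{n:KPT97}, of which this is the semi-edge-free special case.
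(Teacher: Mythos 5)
Part (a) of your proposal is fine and matches the paper. The problem is parts (b) and (c): you correctly identify that (b) is the crux and that connectivity alone cannot rule out a ``folded'' partial covering projection, but you never actually supply an argument that does. Your proposed route --- ``walk along edges from $u_e$ into $G$\dots{} all the pendant vertices are forced to the same vertex'' --- does not follow from anything you state: the local bijectivity conditions at the interior vertices of $G_u$ are perfectly consistent, a priori, with the $k$ neighbors $w_e$ of the former vertex $u$ being mapped to different vertices of $H$ and the pendant vertices landing all over the place. Your second attempt (``any partial cover of $H$ by $G_u$ extends to a genuine cover of a graph in which the $u_e$ are identified --- hence they must have had equal images'') is circular, since the legitimacy of that identification is exactly the conjunction of (b) and (c). The same circularity infects your argument for (c), which invokes local bijectivity ``at $u$'' in $G$, a vertex that does not exist in $G_u$.

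The paper's actual argument for (b) is global and arithmetic, not local. Fix a proper $k$-edge-coloring $\phi$ of $H$ and pull it back to $G_u$ via $f$. Since $H$ has a perfect matching, $|V(H)|$ is even, hence so is $|V(G)|$, hence $|V^*|=|V(G)|-1$ is odd; therefore no color class can be missing from the $k$ pendant edges (it would be a perfect matching on $V^*$), so each of the $k$ colors appears on exactly one pendant edge. Then one counts, for each $a\in V(H)$, the number $h_a$ of vertices of $V^*$ mapped to $a$: an edge $e'$ of $H$ hit by a pendant edge forces $h_b=h_a+1$, any other edge forces $h_a=h_b$. This stratifies $V(H)$ into levels; since each color class of $H$ is a perfect matching and the cut between consecutive levels contains at most one edge of each color, there are exactly two levels, and the identity $(r+1-\ell)|V(H)|=|L_r|-1$ forces $|L_r|=1$, i.e., all $k$ directed edges emanate from a single vertex $z$ and every $u_e$ maps to $z$. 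Part (c) then falls out because the pendant edges carry $k$ distinct pulled-back colors. None of this machinery (parity of $|V^*|$, the level decomposition, the divisibility step) appears in your proposal, and the fallback of citing the analogous lemma of~\cite{n:KPT97} does not close the gap either, since that lemma is proved only for simple $H$ whereas here $H$ may have multiple edges --- which is precisely why the paper redoes the argument.
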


\begin{proof}
Item (a) follows directly from Proposition~\ref{prop:multicoverexistence}. 
To prove (b) and (c), we first show that in any partial covering projection $f$ from $G_u$ onto $H$, the pendant edges are mapped onto edges of all colors.   

Denote by $E^*$ the pendant edges of $G_u$, and set $V^*=V(G)\setminus \{u\}=V(G_u)\setminus\bigcup_{e\in E_G(u)}\{u_e\}$. 
Observe first that since $H$ has a perfect matching, the number of its vertices is even, and since $G$ covers $H$, so is the number of vertices of $G$. Hence $|V^*|\equiv 1 \mod 2$. 

Suppose $f:G_u\longrightarrow H$ is a partial covering projection. Fix a proper $k$-coloring $\phi$ of the edges of $H$ and define a $k$-coloring $\widetilde{\phi}$ of the edges of $G_u$ by setting $\widetilde{\phi}(e)=\phi(f(e))$. Since $f$ is a partial covering projection, $\widetilde{\phi}$ is a proper edge-coloring of $G_u$. If some color is missing on all of the edges from $E^*$, edges of this color would form a perfect matching in $G_u[V^*]$ and $|V^*|$ would be even, a contradiction. Hence every color appears on exactly one edge of $E^*$. 

For every $a\in V(H)$, denote by $h_a$ the number of vertices of $V^*$ that are mapped onto $a$ by $f$. Consider an edge $e'$ connecting vertices $a$ and $b$ of $H$. If $e'=f(e)$ for some edge $e\in E^*$, we have $f(u_e)=a$ and $f(w_e)=b$, or vice versa. Every vertex in $f^{-1}(a)\cap V^*$ is adjacent to exactly one vertex in $f^{-1}(b)\cap V^*$ via an edge of color $\phi(e')$, whilst every vertex of  $f^{-1}(b)\cap V^*$ except for $w_e$ is adjacent to exactly one vertex in  $f^{-1}(a)\cap V^*$ via an edge of the same color. Hence $h_b=h_a+1$. Orient the edge $e'$ from $a$ to $b$ in such a case. If, on the other hand, the edge $e'$ is not the image of any edge from $E^*$, the edges of color $\phi(e')$ form a matching between the vertices of $f^{-1}(a)\cap V^*$ and the vertices of $f^{-1}(b)\cap V^*$, and $h_a=h_b$. Leave the edge $e'$ undirected in such a case.

After processing all edges of $H$ in this way, we have constructed a mixed graph $\overrightarrow{H}$ which has exactly one edge of each color directed. From the meaning of the orientations and non-orientations of edges of $\overrightarrow{H}$ for the values of $h_a, a\in V(H)$, it follows that the vertex set of $H$ falls into levels, say $L_r, L_{r+1},\ldots,L_s$ such that undirected edges live inside the levels, while directed edges connect vertices of consecutive levels and are directed from $L_i$ to $L_{i+1}$ for suitable $i$. Every two consecutive levels are connected by at least one directed edge in this way. Since $H$ is connected, every vertex $a\in L_i$ will satisfy $h_a=i$ (the indices $r$ and $s$ are chosen so that $r$ is the smallest value of $h_a$ and $s$ is the largest one). Directed edges connecting two consecutive levels form a cut in $H$, and since edges of each color form a perfect matching, the parities of the numbers of edges of each color in this cut are the same. Since the cut contains at least one edge, but at most one edge of each color (exactly one edge of each color is directed), it follows that the cut contains all $k$ directed edges and that $H$ has only two levels, i.e., $s=r+1$. Thus $H$ has $|L_r|$ vertices $a$ with $h_a=r$ and $|L_{r+1}|=|V(H)|-|L_r|$ vertices $a$ with $h_a=r+1$. It follows that 
$$|V^*|=r|L_r|+(r+1)(|V(H)|-|L_r|)=(r+1)|V(H)|-|L_r|.$$
We know that $G$ covers $H$, and so $|V(G)|=\ell|V(H)|$ for some $\ell$. Thus $|V^*|=\ell|V(H)|-1$. Thus we obtain 
$$(r+1)|V(H)|-|L_r|=\ell|V(H)|-1,$$
which implies
$$(r+1-\ell)|V(H)|=|L_r|-1,$$
and since $1\le |L_r|\le |V(H)|-1$, the only possible way for $|L_r|-1$ to be divisible by $|V(H)|$ is $|L_r|=1$. But this implies that all directed edges of $\overrightarrow{H}$ start in the same vertex, say $z$, and from the construction of $\overrightarrow{H}$ it follows that $f(u_e)=z$ for all $e\in E^*$. This proves (b).

Now (c) follows from the two observations above. The $k$ pendant edges of $E^*$ have mutually distinct colors in $\widetilde{\phi}$, and thus they must be mapped to distinct edges of $E_H(z)$ by $f$.
\qed
\end{proof}

\subsection{Reduction from hypergraph coloring}\label{subsec:reduction}

\begin{figure}[t]
\centering
\includegraphics[width=0.9\textwidth]{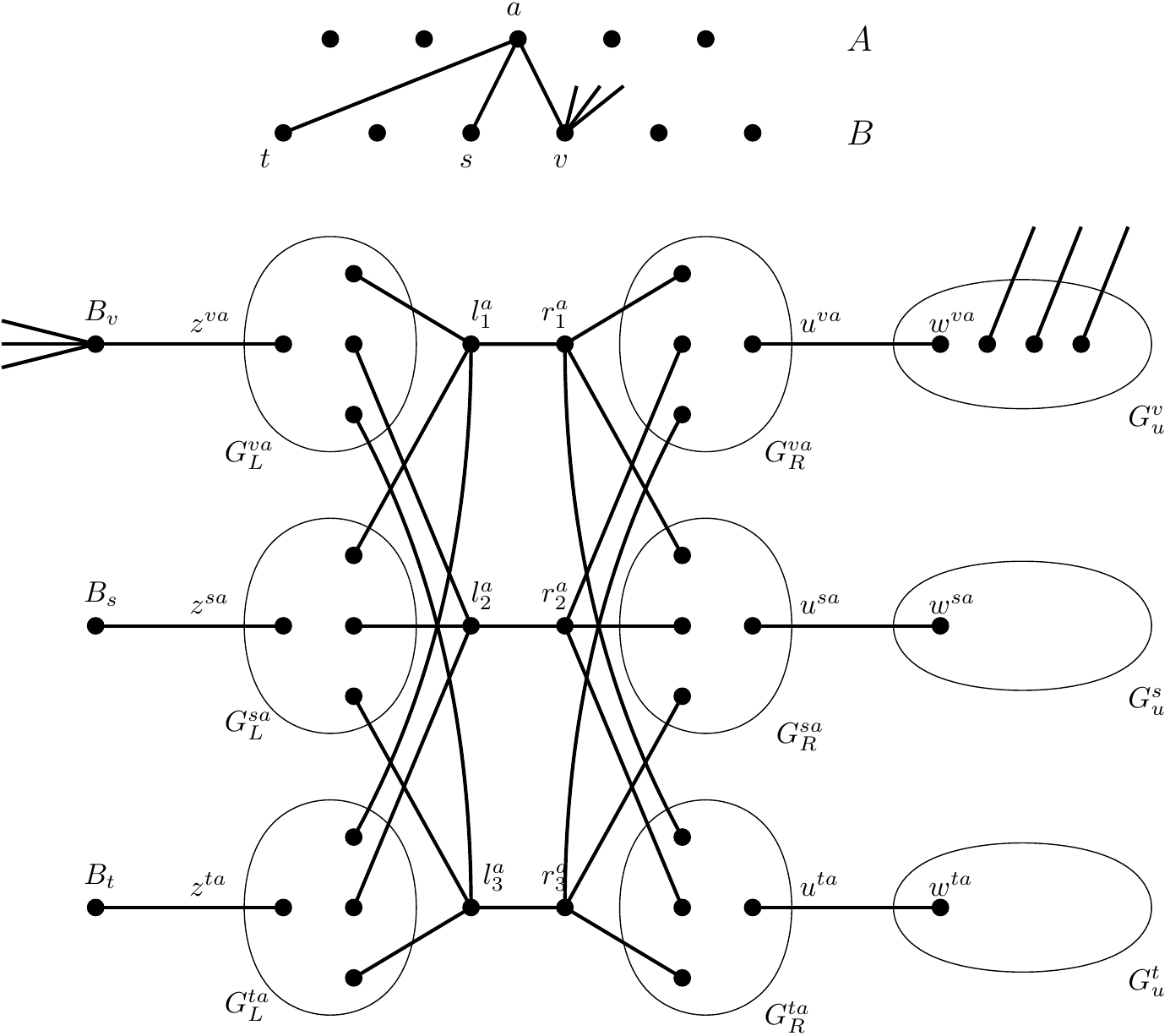}
\caption{An illustration to the construction of $G_K$ for $k=4$.}
\label{fig:construction_G_K}
\end{figure}

\textcolor{red}{
In this subsection, we prove Theorem~\ref{thm:mainresult} for the case of bipartite $H$. Note that a bipartite graph has no loops and no semi-edges, and that it is $k$-edge-colorable if it is $k$-regular. Note also that a semi-simple vertex in a graph with no semi-edges is simple.} 

\textcolor{red}{
\begin{proposition}\label{prop:bipartiteproof}
Let $k\ge 3$ and let $H$ be a connected $k$-regular bipartite graph with a simple vertex. Then {\sc List-$H$-Cover} is NP-hard for simple input graphs.
\end{proposition}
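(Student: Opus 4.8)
We will reduce from the problem of $k$-edge-colourability of $(k-1)$-uniform $k$-regular hypergraphs, which is NP-complete for every $k\ge 3$; this is the source problem of~\cite{n:KPT97}, and for $k=3$ it is the classical NP-completeness of $3$-edge-colouring of cubic (multi)graphs. We first fix a proper $k$-edge-colouring $\phi$ of $H$ -- such a colouring exists because a $k$-regular bipartite graph is $k$-edge-colourable -- and, using that the simple vertex $x_0$ of $H$ is incident with $k$ \emph{distinct} edges, we relabel the colours so that the edges of $E_H(x_0)$ receive the colours $1,\dots,k$. For an edge $e$ of any graph that is mapped by a (partial) covering projection $f$ to an edge of $H$, call $\phi(f(e))$ the \emph{colour} of $e$; since $f$ is locally bijective, the edges incident with any one vertex of the source always receive pairwise distinct colours.

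Given an instance $\mathcal K$, a $(k-1)$-uniform $k$-regular hypergraph, we build a simple graph $G_{\mathcal K}$ together with lists $\mathcal L$, as follows. For every vertex $v$ of $\mathcal K$ we take a private copy of the multicover gadget $G_u$ from Propositions~\ref{prop:multicoverexistence} and~\ref{prop:multicoverproperties}, and put the $k$ pendant edges of this \emph{vertex gadget} in bijection with the $k$ hyperedges of $\mathcal K$ incident with $v$. By Proposition~\ref{prop:multicoverproperties}(b),(c), in every partial covering projection onto $H$ the $k$ pendant edges of a vertex gadget are mapped to $k$ distinct edges around a common vertex of $H$, hence carry $k$ distinct colours; and by Proposition~\ref{prop:multicoverproperties}(a), any prescribed such colouring of these pendant edges is realisable. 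For every hyperedge $h$ of $\mathcal K$ we then add a \emph{synchronisation gadget} joining the pendant edges that correspond to $h$ in the vertex gadgets of the $k-1$ vertices of $h$, designed so that in any partial covering projection these $k-1$ edges are forced to carry one and the same colour, while any common colour for them remains realisable; for $k=3$ this is merely the identification of the two pendant edges into a single ordinary edge, and for larger $k$ the gadget is again assembled from copies of $G_u$ with some pendant edges identified in pairs (possibly after replacing $\mathcal K$ by two disjoint copies of itself, to make the number of hyperedges even). Lastly, using $\mathcal L$ we restrict suitable vertices and edges located inside the gadgets to $x_0$ and to its incident edges; this makes the colour of each such ``interface'' edge read off unambiguously as an element of $\{1,\dots,k\}$, so that the gadgets can be glued consistently even though $H$ may carry multiple edges away from $x_0$. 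All remaining lists are full. Since $G_u$ has size bounded in terms of $H$ only, $G_{\mathcal K}$ is a simple graph of size $O(|\mathcal K|)$ and is constructed in polynomial time.

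It then remains to show that $\mathcal K$ is $k$-edge-colourable if and only if $(G_{\mathcal K},\mathcal L)$ is a yes-instance of \textsc{List-$H$-Cover}. Given a proper $k$-edge-colouring $c$ of $\mathcal K$, for each vertex $v$ the values $c(h)$ over the hyperedges $h\ni v$ are pairwise distinct, hence form a bijection onto $\{1,\dots,k\}$; we therefore invoke Proposition~\ref{prop:multicoverproperties}(a) to realise, inside the vertex gadget of $v$, the partial covering projection that sends the pendant edge of each $h\ni v$ to the colour-$c(h)$ edge at $x_0$, we do likewise inside the synchronisation gadgets, and we verify that these partial covering projections agree on the identified interface edges and thus assemble into a covering projection $G_{\mathcal K}\to H$ respecting $\mathcal L$. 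Conversely, a list-respecting covering projection $f$ of $G_{\mathcal K}$ restricts to a partial covering projection on each vertex gadget, so its $k$ pendant edges there receive $k$ distinct colours; the synchronisation gadget of each $h$ forces a well-defined value $c(h)\in\{1,\dots,k\}$, and the distinctness inside each vertex gadget is precisely the statement that $c$ is a proper $k$-edge-colouring of $\mathcal K$.

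The main obstacle is the synchronisation gadget. Because covering projections are locally bijective, they spontaneously impose \emph{distinct} colours around every vertex, so forcing $k-1$ pendant edges to share a colour requires propagating this equality edge by edge through auxiliary copies of $G_u$; carrying this out uniformly for all $k\ge 3$, while at the same time anchoring the whole interface at the simple vertex $x_0$ -- which is exactly what makes the reduction insensitive to multiple edges of $H$ elsewhere -- is the technical heart of the argument, and is precisely why we work with the list version of the problem rather than with plain \textsc{$H$-Cover}.
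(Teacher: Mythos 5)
Your high-level plan --- reducing from $k$-edge-colourability of $(k-1)$-uniform $k$-regular hypergraphs, using the multicover gadget $G_u$ of Propositions~\ref{prop:multicoverexistence} and~\ref{prop:multicoverproperties} as the vertex gadget, and anchoring the interface at the simple vertex via lists --- matches the paper's. But there is a genuine gap exactly where you yourself point to ``the technical heart of the argument'': the synchronisation gadget for $k\ge 4$ is never constructed. You need to force the $k-1\ge 3$ pendant edges associated with one hyperedge, one per vertex gadget, to carry the \emph{same} colour, yet the only tool available, $G_u$, does the opposite: properties (b) and (c) force its pendant edges onto \emph{distinct} edges of $H$. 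Pairwise identification of pendant edges handles $k=3$, where a hyperedge has only two vertices, but for $k\ge 4$ ``identified in pairs'' gives no mechanism for propagating equality of an edge-colour through a copy of $G_u$ from one pendant edge to another, and duplicating $\mathcal{K}$ to make the number of hyperedges even does not address this at all. Without an explicit gadget together with a proof of its forcing and realisability properties, the reduction is incomplete.

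The paper sidesteps the difficulty by synchronising on \emph{vertices} of $H$ rather than on edge-colours. Its hyperedge gadget for $a$ consists of $2(k-1)$ copies of $G_u$ and $2(k-1)$ auxiliary vertices $\ell^a_i, r^a_i$: for each of the $k-1$ vertices $v$ of the hyperedge $a$ one takes copies $G^{va}_L$ and $G^{va}_R$, and the pendant vertices $r^a_1,\dots,r^a_{k-1}$ of $G^{va}_R$ are \emph{shared} among all these copies, the remaining pendant vertex of $G^{va}_R$ being the vertex $w^{va}$ inside the vertex gadget of $v$. Property (b) applied to each $G^{va}_R$ then forces all the $w^{va}$, over all $v$ in $a$, onto one common vertex $y\in N_H(x)$, and this $y$ \emph{is} the colour of $a$. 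The rainbow condition at a hypergraph vertex $v$ is read off at an extra vertex $B_v$ that is list-forced onto the simple vertex $x$: its $k$ neighbours must map to $k$ distinct neighbours of $x$, and a short propagation argument (using that $x$ is simple, so the matching edges $\ell^a_i r^a_i$ pin down which neighbour of $B_v$ receives $y$) shows these images are exactly the colours of the $k$ hyperedges containing $v$. If you want to keep your edge-colour encoding you would have to design and verify an analogous sharing construction for edges; as written, the existence of your synchronisation gadget is an assertion, not a proof.
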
}

\begin{proof}
The NP-hardness reduction is exactly the same as in \cite{n:KPT97}, but the proof for the case when multiple edges are allowed in $H$ needs some extra analysis. Hence we need to describe the reduction in full detail in here. The reduction is from $k$-edge-colorability of $(k-1)$-uniform $k$-regular hypergraphs. In the wording of the incidence graph of the hypergraph, suppose we are given a simple bi-regular bipartite graph $K=(A\cup B, E)$ such that all vertices in $A$ (which represent the edges of the hypergraph)  have degree $k-1$ and all vertices in $B$ (which represent the vertices of the hypergraph) have degree $k$. The question is if the vertices of $A$ can be colored by $k$ colors so that the neighborhood of each vertex from $B$ is rainbow colored (i.e., each vertex from $B$ sees all $k$ colors on its neighbors, each color exactly once). This problem is NP-complete for every fixed $k\ge 3$~\cite{n:KPT97}. 

Given such a graph $K$, we build an input graph $G_K$ by local replacements. Recall that we are working with a $k$-edge-colorable $k$-regular graph $H$ with a simple vertex, say $x$. And we are guaranteed the existence of a graph $G_u$  which satisfies the properties stated in Proposition~\ref{prop:multicoverproperties}. The size of $G_u$ is constant with respect to the size of the input graph $K$. And this $G_u$ will be a key building block in our construction.

First, every vertex $v\in B$ will be replaced by a copy of the so-called vertex gadget, which is a disjoint union of a copy $G_u^v$ of $G_u$ and a single vertex $B_v$. For every neighbor $a\in A$ of $v$, one of the pendant vertices of $G_u^v$ will be denoted by $u^{va}$, and its neighbor within $G_u^v$ will be denoted by $w^{va}$. 

The hyperedge gadgets used to replace the vertices of $A$ are more complicated. This gadget consists of $2(k-1)$ copies of $G_u$ linked together in the following way. Let $a\in A$. We take $2(k-1)$ vertices $\ell^a_i, r^a_i, i=1,2,\ldots,k-1$, and for every neighbor $v$ of $a$, we take two copies $G^{va}_L$ and $G^{va}_R$ of $G_u$. The pendant vertices of $G^{va}_L$ will be unified with $B_v$ and $\ell^a_1, \ell^a_2,\ldots,\ell^a_{k-1}$, while the pendant vertices of $G^{va}_R$ will be unified with $w^{va}$ and   $r^a_1, r^a_2,\ldots,r^a_{k-1}$. The neighbor of $B_v$ in $G^{va}_L$ will be denoted by $z^{va}$. Lastly, the matching $\ell^{a}_ir^a_i, i=1,2,\ldots,k-1$ is added. This completes the construction of $G_K$. See Figure~\ref{fig:construction_G_K} for an illustrative example.

The resulting graph $G_K$ is $k$-regular. To make it an instance of the {\sc List-$H$-Cover} problem, we prescribe that the vertices $B_v, v\in B$ and $\ell^a_i, a\in A, i=1,2,\ldots,k-1$ are all mapped onto $x$ (this means, that for these vertices, their lists of admissible target vertices are one-element and all the same, while for the remaining vertices, their lists are full, as well as for all the edges).

The fact that $x$ is simple implies the following observation: For every partial covering projection from $G_u$ to $H$ which maps all the pendant vertices onto $x$, their neighbors in $G_u$ are mapped onto distinct vertices of $H$ (this immediately follows from property (c) of Proposition~\ref{prop:multicoverproperties}). Similarly, if any vertex of $G_K$ is mapped onto $x$ by a covering projection, then its neighbors are mapped onto distinct vertices of $H$ (the neighborhood $N_H(x)$ of $x$ in $H$). We will exploit these observations in the following argumentation.

Suppose $f:G_K\longrightarrow H$ is a covering projection such that all vertices $B_v, v\in B$ and $\ell^a_i, a\in A, i=1,2,\ldots,k-1$ are mapped onto $x$. Consider an $a\in A$, and let $f(r_1^a)=y$, whence $y\in V(H)$ is a neighbor of $x$ in $H$. Property c) applied to any $G^{va}_R$, for $v$ being a neighbor of $a$ in $K$, implies that $f(r^a_i)=f(w^{va})=y$ for all $i=2,\ldots,k-1$. Since each $\ell^a_i$ has a neighbor $r^a_i$ mapped onto $y$, none of their neighbors in $G^{va}_L$ is mapped onto $y$. Property (d) then implies that $f(z^{va})=y$. Define a coloring $\phi$ of $A$ by colors $N_H(x)$ as $\phi(a)=f(r^a_1)$. Consider a vertex $v\in B$. The neighbors of $B_v$ in $G_K$ are $z^{va}, a\in N_K(v)$. Since $f(B_v)=x$ and $x$ is simple, the vertices  $z^{va}, a\in N_K(v)$ are mapped onto different neighbors of $x$ by $f$, and hence the colors $\phi(a), a\in N_K(v)$ are all distinct. Thus $\phi$ is a $k$-coloring of $A$ of the required property.

Suppose for the opposite direction that $A$ allows a $k$-coloring $\phi$ such that each vertex $v\in B$ sees all $k$ colors on its neighbors, and identify the colors with the names of the neighbors of $x$ in $H$. Furthermore, set $f$ in the following way:
\begin{align*}
f(B_v) &= \ell^a_i=x\mbox{ for all $v\in B, a\in A, i=1,2,\ldots,k-1$ (as required by the lists)}, \\ 
f(u^{va}) &= x\mbox{ for all $v\in B$ and $a\in N_K(v)$}, \mbox{and}\\
f(r^a_i) &= f(w^{va})=f(z^{va})=\phi(a)\mbox{ for all $a\in A$ and $v\in N_K(a)$}.
\end{align*}
Finally, define $f$ on the edges incident to $\ell^a_i$ ($r^a_i$, respectively) so that for every $i$, these edges are mapped onto different edges incident to $x$ (to $\phi(a)$, respectively), and, on the other hand, for every $v\in N_K(a)$, the pendant edges of $G^{va}_L$ (of $G^{va}_R$, respectively) are mapped onto distinct edges incident to $x$ (to $\phi(a)$, respectively).

The property (a) of Proposition~\ref{prop:multicoverproperties} implies that this mapping can be extended to partial covering projections within each copy of $G_u$ used in the construction of $G_K$. To see that they altogether provide a covering projection from $G_K$ to $H$, note that for each $v\in B$, the edges incident with the vertex $B_v$ are mapped onto different edges because their other endpoints are $\phi(a), a\in N_K(v)$, and hence all different by the assumption on the coloring $\phi$, and also each copy $G^v_u$ has its pendant edges mapped onto different edges incident to $x$, since the pendant vertices $u^{va}, a\in N_K(v)$ are all mapped onto $x$ and their neighbors $w^{va}$ in $G^v_u$ are mapped onto distinct vertices $\phi(a), a\in N_K(v)$. This concludes the proof.   
\qed
\end{proof}

\subsection{The non-bipartite case}\label{subsec:nonbipartite}

\begin{proof}[of Theorem~\ref{thm:mainresult}]\textcolor{red}{
If $H$ is bipartite, the claim follows from Proposition~\ref{prop:bipartiteproof}.}
Suppose the graph $H$ is not bipartite (this includes the case when $H$ contains loops and/or semi-edges). Consider $H'=H\times K_2$, where $\times$ denotes the categorical product, see an example in Fig.~\ref{fig:sau_to_rings}. This $H'$ may still contain multiple edges (the product of a multiple ordinary edge with $K_2$ is again a multiple edge, but also the product of a loop with $K_2$ is a double ordinary edge, and the product of a multiple semi-edge with $K_2$ results in a multiple ordinary edge as well), but it is bipartite (and thus has neither semi-edges nor loops) and therefore is $k$-edge-colorable. In the product with $K_2$, every semi-simple vertex of $H$ results in two simple vertices of $H'$. Hence, by the result of the preceding subsection, {\sc List-$H'$-Cover} is NP-complete. 

It is proved in~\cite{n:FK08} that for simple graphs, $G$ covers $H\times K_2$ if and only if $G$ is bipartite and covers $H$. This proof readily extends to graphs $H$ that allow loops, semi-edges and multiple edges. The proof for the list version of the problem may get more complicated in general. However, the list version that we have proven NP-complete in the preceding subsection is very special: the lists of all edges are full, and so are the lists of all the vertices except for those which are prescribed to be mapped onto the same simple vertex, say $x'$. If we take such an instance of {\sc List-$H'$-Cover}, this $x'$ is a copy of a semi-simple vertex $x\in V(H)$, and all vertices of the input graph $G$ that are prescribed to be mapped onto $x'$ are from the same class of its bipartition. We just prescribe them to be mapped onto $x$ as an instance of {\sc List-$H$-Cover}. It is easy to see that this mapping can be extended to a covering projection to $H$ if and only if $G$ allows a covering projection to $H'$ in which all these prescribed vertices are mapped onto $x'$. This concludes the proof.
\qed
\end{proof}

\section{Sausages and rings}\label{sec:sausages}

In this section we consider two special classes of cubic graphs. These graphs play a special role in the classification in Theorem~\ref{thm:cubicdichotomy}. The {\em $k$-ring} (where $k \geq 2$) is the cubic graph obtained from the cycle of length $2k$ by doubling every second edge. We call a {\em $k$-sausage} every cubic graph that is obtained from a path on $k$ vertices by doubling every other edge and adding loops or semi-edges to the end-vertices of the path to make the graph 3-regular. Note that while for every $k$, the $k$-ring is defined uniquely, there are several types of $k$-sausages, as depicted in Figure~\ref{fig:sausages_and_rings}.

 \begin{figure}
\centering
\includegraphics[width=0.9\textwidth]{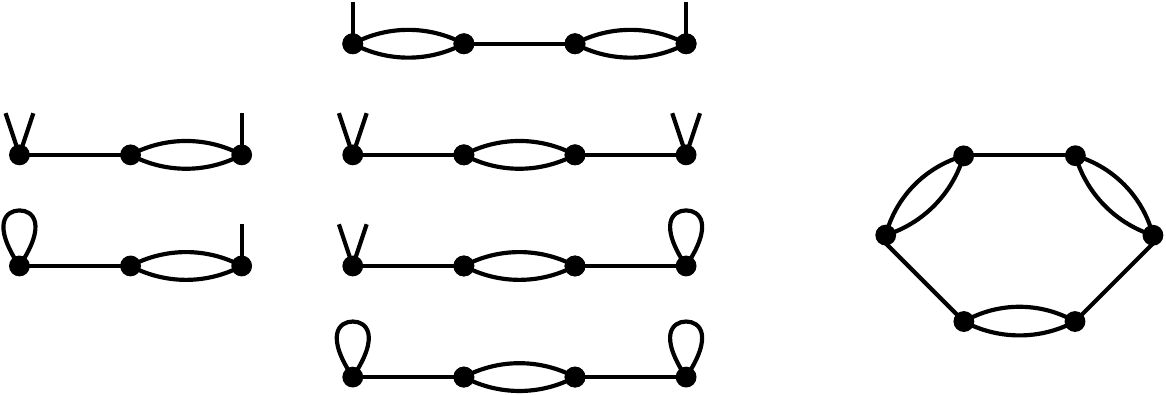}
\caption{The two non-isomorphic 3-sausages (left), the four non-isomorphic 4-sausages (middle), and the 3-ring (right).}
\label{fig:sausages_and_rings}
\end{figure}

\begin{proposition}\label{prop:sau_and_rings}
Let $k\ge 2$ and let $S_k$ be a $k$-sausage. Then $S_k\times K_2$ is isomorphic to the $k$-ring.
\end{proposition}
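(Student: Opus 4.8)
The plan is to describe $S_k\times K_2$ explicitly and recognise it as the union of two edge-disjoint perfect matchings, one of which appears with multiplicity two -- which is exactly what a $k$-ring is.

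First I would fix notation for $S_k$: it lives on a path $v_1v_2\cdots v_k$ in which the doubled edges alternate, so every interior vertex $v_i$ is incident with exactly one doubled path edge (contributing $2$ to its degree) and one ordinary path edge (contributing $1$), while each endpoint ($v_1$, and symmetrically $v_k$) is incident with exactly one of the following pairs: a doubled path edge together with a single semi-edge, or an ordinary path edge together with either a loop or a double semi-edge. In every case each vertex carries exactly one \emph{heavy} element -- a doubled edge, a loop, or a double semi-edge, each contributing $2$ to the degree -- and exactly one \emph{light} element -- an ordinary edge or a single semi-edge, contributing $1$. I would then record how the product with $K_2$ acts on edge elements: an ordinary edge $uv$ lifts to the two ordinary edges $(u,0)(v,1)$ and $(u,1)(v,0)$; a doubled edge lifts to two double bonds; a loop at $v$ and a double semi-edge at $v$ both lift to the double bond joining $(v,0)$ and $(v,1)$; and a single semi-edge at $v$ lifts to the single ordinary edge $(v,0)(v,1)$.

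From these rules I would read off the neighbourhood of an arbitrary vertex $(v_i,\varepsilon)$ of $S_k\times K_2$: the lift of the heavy element at $v_i$ contributes a double bond to one neighbour, the lift of the light element contributes a single ordinary edge to another neighbour, and a short case check (interior vertex; endpoint whose incident path edge is doubled; endpoint whose incident path edge is ordinary) shows that these two neighbours are always distinct. Hence every vertex of $S_k\times K_2$ lies in exactly one double bond and in exactly one ordinary non-parallel edge; the double bonds therefore form a perfect matching $M_2$, the remaining edges form a perfect matching $M_1$, and $M_1\cap M_2=\emptyset$, so $S_k\times K_2$ equals $M_1$ together with $M_2$ taken twice. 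Next I would note that $S_k$ is connected and, because the decoration of $v_1$ is a semi-edge or a loop (a closed walk of odd length), is not bipartite, so $S_k\times K_2$ is connected. Then the underlying simple graph $M_1\cup M_2$ is a connected $2$-regular graph on $2k$ vertices, i.e. a single cycle $C_{2k}$, on which the complementary perfect matchings $M_1$ and $M_2$ necessarily occupy alternate edges. Restoring multiplicities, $S_k\times K_2$ is $C_{2k}$ with every second edge doubled, which is the $k$-ring.

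The only delicate point is the bookkeeping: one must check that the ``one heavy and one light incidence per vertex'' description of $S_k$ genuinely covers all isomorphism types of $k$-sausage (both parities of $k$, both phases of the alternation of doubled edges along the path, and loop versus double semi-edge at an endpoint), and that the heavy and light lifts at a vertex never land on the same neighbour -- the latter being precisely what guarantees $M_1\cap M_2=\emptyset$ and prevents $M_1\cup M_2$ from degenerating below a $2k$-cycle. Once the four lifting rules above are fixed, these verifications are routine and uniform.
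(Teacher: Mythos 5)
Your proposal is correct and follows essentially the same route as the paper's proof: both record how each edge type (ordinary edge, doubled edge, loop, single or double semi-edge) lifts under the product with $K_2$ and then read off the cyclic structure with $k$ double bonds. Your version merely makes explicit the bookkeeping (one heavy and one light incidence per vertex, the two-perfect-matching decomposition, and connectivity via non-bipartiteness) that the paper leaves implicit.
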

\begin{proof}
The product $H\times K_2$ is a bipartite graph with no loops or semi-edges, in which every ordinary edge in $H$ gives rise to a pair of ordinary edges of the same multiplicity. \textcolor{red}{ A loop, as well as} a pair of semi-edges incident to the same vertex of $H$, gives rise to a double ordinary edge. A single semi-edge in $H$ gives rise to a simple ordinary edge in $H\times K_2$. Thus $S_k\times K_2$ has a cyclic structure and the number of double edges is equal to the number of vertices of $S_k$, see Figure~\ref{fig:sau_to_rings}. \qed   
\end{proof}

\begin{figure}
\centering
\includegraphics[width=0.4\textwidth]{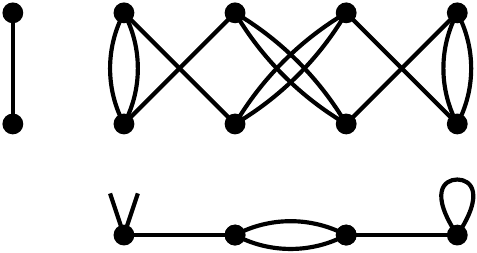}
\caption{The product of a $4$-sausage with $K_2$ is isomorphic to the $4$-ring.}
\label{fig:sau_to_rings}
\end{figure}

In the following three theorems we show that the \textsc{List-$k$-ring-Cover} problem is NP-complete for simple graphs on the input for every $k \geq 3$. (In other words, for all rings, except for the $2$-ring, for which the  NP-completeness follows even without lists from \textcolor{red}{Theorem~18 of~\cite{n:BFHJK-MFCS}}.) \textcolor{blue}{Note also that in case of $k$ having an odd-prime divisor (Theorem~\ref{thm:rings}) and $k=4$ (Theorem~\ref{thm:4ring}), we prove the NP-hardness in a stronger form, i.e., without lists.}

\begin{theorem}
\label{thm:rings}
The {\sc $k$-ring-Cover} problem is NP-complete for simple input graphs for every $k=2^{\alpha}(2\beta+3)$ such that $\alpha$ and $\beta$ are non-negative integers.
\end{theorem}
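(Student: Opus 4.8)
The plan is as follows. Membership of \textsc{$k$-ring-Cover} in NP is immediate by the same argument as for \textsc{List-$H$-Cover} (guess the vertex and edge maps and verify the local-bijection conditions in polynomial time), so the real task is NP-hardness.

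First I would record a structural description of the covers of the $k$-ring $R_k$. Writing $R_k$ as a Hamilton cycle $C_{2k}=v_0v_1\cdots v_{2k-1}$ (one copy of each doubled edge together with all single edges) plus a perfect matching $M_0$ (the remaining copies of the doubled edges, joining $v_{2i}$ to $v_{2i+1}$), one obtains: a simple graph $G$ covers $R_k$ if and only if $E(G)$ can be partitioned into a $2$-factor $F$ and a perfect matching $M$ admitting a labeling $\ell\colon V(G)\to\mathbb{Z}_{2k}$ which increases by $1$ along each cycle of $F$ (so every such cycle has length divisible by $2k$) and such that each edge of $M$ joins a pair of vertices whose labels are of the form $\{2i,2i+1\}$. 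No loop/semi-edge subtleties occur since $R_k$ has only ordinary edges, and simplicity of $G$ prevents the $F$-neighbour and the $M$-partner of a given vertex from coinciding. I would prove this equivalence as a preliminary lemma; note in particular that its mod-$2$ part recovers the known fact that any simple cover of $R_k=S_k\times K_2$ is bipartite.

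Then comes the reduction. Since $k=2^\alpha m$ with $m=2\beta+3$ odd, we have $2k=2^{\alpha+1}m$ with $\gcd(2^{\alpha+1},m)=1$, so by the Chinese Remainder Theorem $\ell$ splits into coordinates $(\ell',\ell_m)$ with $\ell_m\colon V(G)\to\mathbb{Z}_m$; the $2$-factor and matching conditions split coordinate-wise, and the essential combinatorial freedom lives in the $\mathbb{Z}_m$-coordinate, precisely because $m$ is odd, so $C_m$ has no perfect matching and $\ell_m$ cannot be read off locally. I would build a constant-size simple ``variable gadget'' with two designated ports which, in any partial covering projection onto $R_k$, has exactly two admissible states, differing by a unit shift in the $\mathbb{Z}_m$-coordinate; chaining copies of it along long cycles of the intended $2$-factor turns the state of a gadget into a Boolean variable, and each clause of an instance of a standard NP-hard Boolean constraint problem (e.g. monotone Not-All-Equal SAT, equivalently $2$-colouring of $3$-uniform hypergraphs) is realised by a constant-size ``consistency subgraph'' joining the relevant ports by matching edges that are satisfiable exactly when the clause is satisfied. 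One then verifies that the constructed graph is simple and of polynomial size and --- using the structural lemma in both directions --- that it covers $R_k$ if and only if the Boolean instance is a yes-instance.

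I expect the main obstacle to be \emph{gadget rigidity}: showing that every covering projection of the constructed graph onto $R_k$ is forced into the intended shape, i.e. that the long cycles must constitute the chosen $2$-factor, that the matching-edge preimages connect ports as planned, and that the only residual degree of freedom is the per-variable phase. Establishing this needs a local analysis in the spirit of Proposition~\ref{prop:multicoverproperties} combined with a global counting/parity argument on the preimages of the vertices of $R_k$, and it is exactly where the hypothesis that $k$ has an odd divisor greater than $2$ is genuinely used --- for $k$ a power of $2$ the binary-choice gadget degenerates, which is why the cases $k=2$, $k=4$ (Theorem~\ref{thm:4ring}) and the remaining powers of $2$ are handled by separate arguments. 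A secondary, purely technical nuisance is to keep the source graph simple throughout the construction, which will force a few auxiliary spacer vertices into the gadgets.
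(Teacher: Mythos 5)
Your preliminary lemma (covers of the $k$-ring $R_k$ correspond to partitions of $E(G)$ into a $2$-factor winding around the underlying $2k$-cycle plus a perfect matching lying over the doubled edges) is correct, and the CRT observation that the residual freedom sits in the $\mathbb{Z}_m$-coordinate for $m=2\beta+3$ is exactly the right intuition. The gap is in the reduction itself, on two counts. First, the source problem: the constraints that $R_k$ naturally imposes between ports of gadgets are affine difference constraints over $\mathbb{Z}_{2k}$ (``the labels of these two ports differ by such-and-such''), and a system of such constraints alone is polynomial-time solvable; to make monotone NAE-SAT work you need a genuinely disjunctive gadget, and you give no construction of one. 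It is far from clear that one exists with matching edges as the coupling mechanism, since a matching edge again only forces its endpoints' labels to form a doubled pair $\{2i,2i+1\}$, i.e.\ another rigid difference constraint. The paper sidesteps this entirely by reducing from {\sc $C_{2\beta+3}$-Hom} (NP-complete by Hell--Ne\v{s}et\v{r}il because $2\beta+3$ is odd and at least $3$): the only constraint it must realize is ``adjacent variables receive labels differing by exactly $2^{\alpha+1}$ in $\mathbb{Z}_{2k}$'', which after dividing by $2^{\alpha+1}$ is precisely adjacency in $C_{2\beta+3}$. This also explains the role of the odd divisor more honestly than your remark that ``the binary-choice gadget degenerates'': for $m$ even, $C_m$-Hom is polynomial, so the residual freedom carries no hardness, which is why the powers of two need Theorems~\ref{thm:ringspowersoftwo} and~\ref{thm:4ring}.

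Second, the rigidity you flag as ``the main obstacle'' is not a verification to be deferred --- it is where the construction lives --- and your specification of a variable gadget with ``exactly two admissible states'' cannot hold in an absolute sense: the rotation subgroup of $\mathrm{Aut}(R_k)$ acts on covering projections, so any port can be sent to at least $k$ distinct vertices. The two states can only be relative shifts, and you then need (but do not provide) a mechanism that fixes a common frame of reference across all variables and that controls which edges of the constructed graph form the $2$-factor versus the matching; note also that chaining shift gadgets around closed cycles imposes a global ``shifts sum to $0$ in $\mathbb{Z}_m$'' condition you would have to neutralize. The paper's forcing device is concrete: for $k\ge 3$ the only non-backtracking closed walk of length $4$ in $R_k$ is a digon traversed twice, so every $4$-cycle of the input maps onto a doubled edge; the vertex, enforcing, and edge gadgets are ladders of such $4$-cycles, which propagate labels deterministically, force all leaves of a vertex gadget onto a single vertex of $R_k$, and force the two vertex gadgets attached to an edge gadget to labels at distance exactly $2^{\alpha+1}$. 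Without an analogous forcing mechanism, your reduction does not get off the ground.
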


\begin{proof}
We reduce from {\sc $C_{(2\beta+3)}$-Hom} for $\beta$ being a non-negative integer which is known to be NP-complete by the dichotomy theorem of Hell and Nešetřil~\cite{hn} (the {\sc $H$-Hom} problem asks for the existence of an unconstrained homomorphism from the input graph to the parameter graph $H$). We call $k$-ring occasionally $H$. Furthermore, the vertices of $H$, i.e.\ $k$-ring, will be consecutively denoted by $1,1',\ldots,k,k'$ with precisely edges $jj'$ being double for $j \in \{1,\ldots,k\}$.

The reader is advised to consult Figure~\ref{fig:5ring} to better understand the gadgets and the reduction.

Let us describe the vertex gadget. Suppose we have a vertex $v$ of degree $\mathrm{deg}(v)$. Then let us take two disjoint copies of $C_l$ where $l = 2k \times \mathrm{deg}(v)$. Let us denote them $C^V_1$ and $C^V_2$ and their vertices $v_{1,1},\ldots,v_{1,l}$. The construction of the gadget proceeds in the following way:
\begin{itemize}
  \item Add edges $v_{1,2i-1}v_{2,2i}$ for every $i \in \{1,\ldots, l/2\}$.
  \item Add edges $v_{2,2i-1}v_{1,2i}$ for every $i \in \{1,\ldots, l/2\}$.
  \item Delete edge $v_{2,2kj-1}v_{2,2kj}$ for every $j \in \{1,\ldots, \mathrm{deg}(v)\}$ and to each of the endpoints of the deleted edge, add a pendant vertex.
\end{itemize}

We call these pendant vertices leaves of the vertex gadget and we speak about \emph{pairs of leaves} when we refer to the two pendant vertices created after the deletion of the same edge. Further, since the gadget is bipartite, we can say that leaves are either black or white depending on the part of the bipartition they belong to. Observe that every pair of leaves in the above sense has one black and one white vertex. 

Before we describe the edge gadget, let us introduce \emph{enforcing gadget} which is simply the same as the vertex gadget for vertex of degree 1, i.e. it is created as was described in the preceding with $l = 2k$.

In the following, we number the vertices of every cycle consecutively starting with 1. We can thus speak about e.g. the $i$-th even vertex. Also, we automatically take the indices of cycles modulo $2k$.

For the actual edge gadget, take $k$ disjoint copies of cycles $C_{2k}$. Let us denote these cycles $C^E_1,\ldots,C^E_k$. We now insert enforcing gadgets in between the cycles as follows.
For all $j$ being odd and $j < k$ and for all even vertices of $C^E_j$, we identify one of the leaves of the enforcing gadget with the $i$-th even vertex of $C^E_j$, and we identify the other leaf of the enforcing gadget with the $i$-th even vertex of $C^E_{j+1}$. 

The similar connection is done in case of $j$ even and $j < k$, except that the $i$-th odd vertex of $C^E_j$ is connected by a copy of enforcing gadget to the $i$-th odd vertex of~$C^E_{j+1}$. 

Now, except for $C^E_1$ and $C^E_k$, all vertices are of degree 3. For every vertex of degree 2 in $C^E_1$ except for the first and the $(1+2^{\alpha+1})$-th vertex, let us say the $i$-th one, we place the enforcing gadget between the $i$-th vertex of $C^E_1$ and the $(i+k)$-th vertex of $C^E_k$ again by identifying each of the leaves with one of the mentioned vertices. This completes the construction of the edge gadget. The only vertices of degree 2 are now the first and the $(1+2^{\alpha+1})$-th vertex in $C^E_1$ and the $(1+k)$-th and the $(1+2^{\alpha +1}+k)$-th vertex in $C^E_k$.

Let us have an instance $G$ of {\sc $C_{(2\beta+3)}$-Hom}. We shall construct a new graph $G'$. For each vertex in $G$, we take a copy of the vertex gadget of corresponding size and insert it into $G'$. For each edge $uv$ in $G$, we obtain a new copy of the edge gadget. We connect it with the vertex gadget corresponding to $u$ as follows. We take one of the so-far unused pair of leaves coming from the vertex gadget corresponding to $u$ and identify the black leaf of the pair with the first vertex in $C^E_1$ of the edge gadget and the white leaf of the pair with the $(1+k)$-th vertex of $C^E_k$. For the vertex gadget of $v$, we again take one of the so-far unused pair of leaves coming from the vertex gadget of $v$ and identify the black leaf of the pair with the $(1+2^{\alpha+1})$-th vertex in $C^E_1$ of the edge gadget and the white leaf of the pair with the $(1+2^{\alpha+1}+k)$-th vertex of $C^E_k$.

We shall now describe possible images of vertex and edge gadget under a covering projection to $k$-ring.

We claim that under every covering projection to $H$, all black leaves of a given vertex gadget will be mapped to the same vertex of $k$-ring and white vertices to its prime version (or vice versa). The crucial observation is that $v_{1,1},v_{2,2},v_{2,1},v_{1,2}$ form a 4-cycle in the vertex gadget. By a simple analysis then, $v_{1,1}$ and $v_{2,1}$ must be mapped to some $\ell$ of $H$ and $v_{1,2}$ and $v_{2,2}$ to $\ell'$ (or vice versa, but let us without loss of generality assume the first possibility). Furthermore this enforces the images of vertices $v_{1,3},v_{2,4},v_{2,3},v_{1,4}$ as well (and they form again a 4-cycle). A repeated use of this propagation ensures that images of $v_{2,2k-1}v_{2,2k}$ are $(\ell-1)$ and $(\ell - 1)'$, respectively (and possibly modulo $2k$, which will be assumed from now on). By the construction, the pendant vertices have then images $(\ell - 1)'$ and $(\ell - 1)$. The argument then can be repeated further and further, until we arrive on the conclusion that all black leaves have inevitably the same image $(\ell - 1)'$ and the white leaves $(\ell - 1)$.

Specially, for the enforcing gadget, we get that its leaves must be mapped to the different endpoints of a specific double edge in $k$-ring. In other words, whenever we have a vertex which is being identified with one of the leaves of the enforcing gadget, then given its image $i$ under a covering projection to the $k$-ring, the other vertex identified with the other leaf of the enforcing gadget has to be mapped to $i'$ in $k$-ring or vice versa.

We claim that under every covering projection, the edge gadget will be mapped to $k$-ring in the following way. Without loss of generality, the first vertex of $C^E_1$, let us call it $a$, will be mapped to $1$. Clearly, as the edge gadget is connected here to a vertex gadget through $a$, one of the neighbors of $a$ in the edge gadget has to be mapped to $1'$ and the other to $k'$, or vice versa. In both cases these neighbors are connected through enforcing gadget to $C^E_2$ and thus this enforces not only the images of vertices at distance 2 from $a$ on $C^E_1$ but also the images of the vertices at distance two from $a$ on $C^E_2$. Proceeding inductively, we arrive on the conclusion that the vertices of $C^E_1$ are either consecutively $1,1',2,2',\ldots,k,k'$ or in the counter-clockwise fashion $1,k',k,\ldots,1'$. Let us, again without loss of generality, describe what follows in the first situation. The other one is in fact just a mirrored situation and the argumentation is thus almost the same. In the first situation, the images of $C^E_2$ are shifted clockwise by one, so the images of vertices of $C^E_2$ are consecutively $k',1,1',\ldots,k$. This propagates the shifting further to $C^E_3$ and inductively up to $C^E_k$, also thank to the enforcing gadgets between the layers.

Up to now, everything was enforced, all vertices of the edge gadget are mapped. However, it remains to check two things.
\begin{enumerate}
  \item Whether all the vertices identified with leaves of vertex gadgets are mapped in the right way.
  \item Whether the edges between $C^E_1$ and $C^E_k$ and their endpoints do have the right images.
\end{enumerate}
Regarding (1), we assumed $a$ is mapped to $1$. Thus the other, white leaf from the pair containing $a$ has to be mapped to $1'$. This is indeed all right considering the shifting of images between the cycles. The same can be argued for the pair of leaves corresponding to the second vertex gadget attached. The argumentation here is based on the fact that the black leaf of the other vertex gadget is identified with vertex at distance $2^{\alpha+1}$ on $C^E_1$.

Case (2) depends on an analogous argument, again based on shifting of the images of $C^E_1$ versus the images of $C^E_k$.

Finally, let us observe that $G'$ is again bipartite and thus it remains valid to speak about black and white leaves of vertex gadgets (or about vertices corresponding to such leaves). Furthermore, we can say that $G'$ has black and white vertices.

\begin{figure}
\centering
{\includegraphics[width=\textwidth]{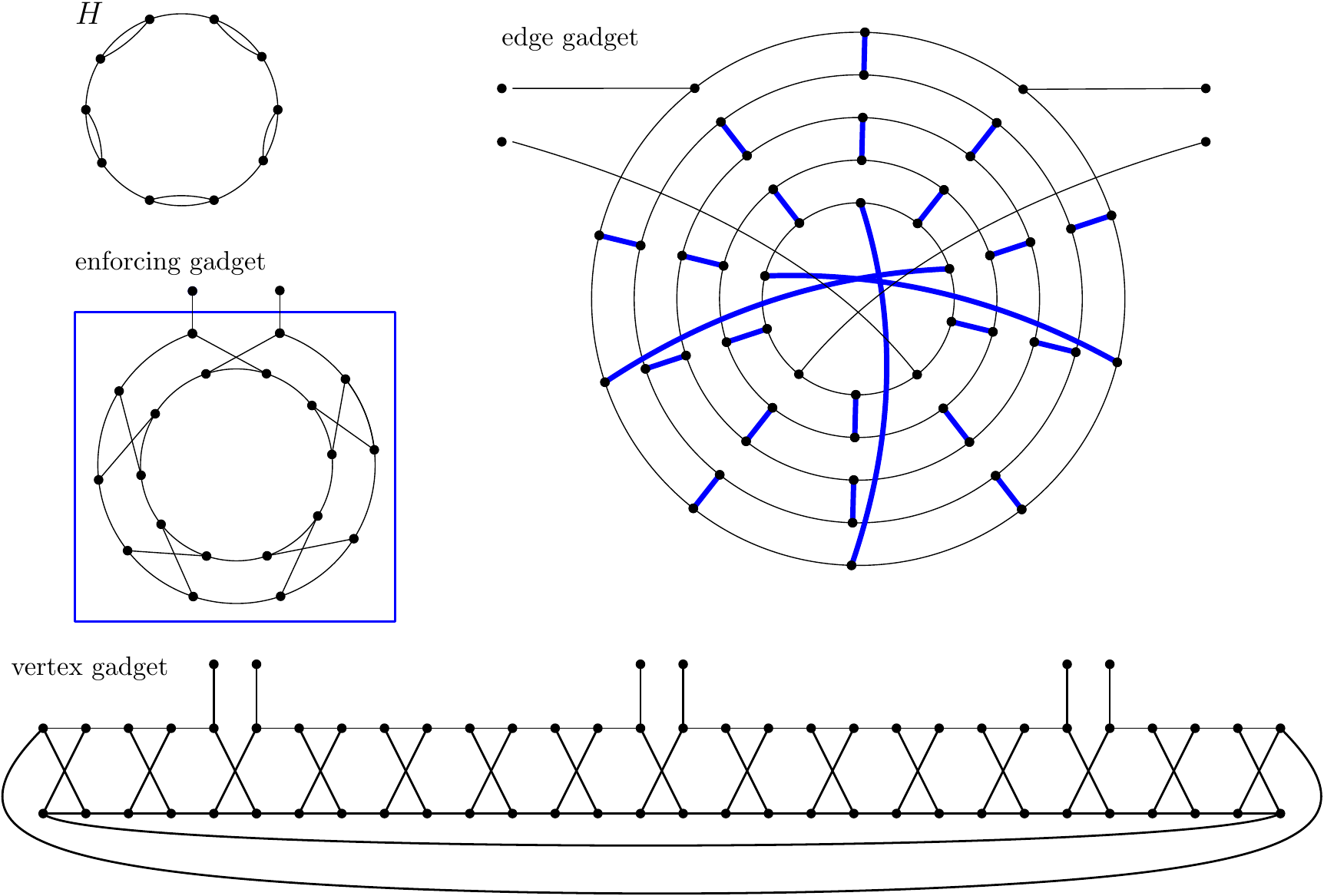}}
\caption{An example of the construction for $H$ being 5-ring.}
\label{fig:5ring}
\end{figure}

Suppose that $G'$ covers $H$. Clearly, as was described and without loss of generality, all vertices corresponding originally to the black leaves of all vertex gadgets in $G$ must be mapped to the black vertices of $H$ and white leaves to the white vertices of $H$. We can further assume without loss of generality that black vertices of vertex gadgets map to vertices $1+(j-1)(2^{\alpha+1})$ where $j \in \{1,\ldots,2\beta - 3 \}$ of $H$.

Edge gadgets enforce that whenever there is an edge in $G$, then the images of the respective black leaves of vertex gadgets connected to it are exactly at distance $2^{\alpha + 1}$ from each other. This implies that there is a homomorphism of $G$ to $C_{(2\beta+3)}$ and we can say that the image of $v$ is the $j$-th vertex of $C_{(2\beta+3)}$ if the black leaves of the vertex gadget of $v$ map to $1+(j-1)(2^{\alpha+1})$.

For the other direction, suppose there exists a homomorphism of $G$ to $C_{(2\beta+3)}$. If a vertex $v$ of $G$ is mapped to the $j$-th vertex of $C_{(2\beta+3)}$, we map the vertices corresponding to the black leaves of the vertex gadget of $v$ to the vertex $1+(j-1)(2^{\alpha+1})$ of $H$. We already know that this ensures that there is only one possible mapping of vertices of the vertex gadget of $v$ to $H$. The same can be done and said for all the other vertex gadgets of $G'$. From the analysis of possible mappings of the edge gadgets, we know than we can complete the mapping now so that the result is a covering projection of $G'$ to $H$.

We showed that there is a homomorphism of $G$ to $C_{(2\beta+3)}$ if and only if $G'$ covers $H$. This completes the reduction and thus the proof of the theorem.
\qed
\end{proof}

\begin{theorem}
\label{thm:ringspowersoftwo}
The {\sc List-$k$-ring-Cover} problem is NP-complete for simple input graphs for every $k=2^{\alpha}$ such that $\alpha\ge 3$ is an integer.
\end{theorem}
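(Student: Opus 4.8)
The plan is to mimic the reduction used in Theorem~\ref{thm:rings}, but to replace the role of \textsc{$C_{(2\beta+3)}$-Hom} with a source of NP-hardness that survives when $k$ is a pure power of two. Recall that the reduction in Theorem~\ref{thm:rings} had two ingredients: a \emph{vertex gadget} whose black and white leaves are forced to a matched pair $\ell, \ell'$ of a double edge of the $k$-ring (with a controllable ``shift''), and an \emph{edge gadget} which, when its two attachment points sit at distance $2^{\alpha+1}$ apart on $C^E_1$, forces the two incident vertex gadgets to have images exactly $2^{\alpha+1}$ apart around the ring. When $k=2^\alpha$, the available shifts $1+(j-1)2^{\alpha+1}$ collapse, so the ``outer cycle'' structure of Theorem~\ref{thm:rings} degenerates and we no longer get a free homomorphism target. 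The fix is to use lists: we will build a gadget that pins down certain vertices of the input to \emph{prescribed} vertices of the $k$-ring, and then encode a list-constrained combinatorial problem (the natural candidate is \textsc{List-$C_k$-Cover} restricted appropriately, or a hypergraph $k$-colorability instance as in Proposition~\ref{prop:bipartiteproof}) whose NP-hardness for $k=2^\alpha$, $\alpha\ge 3$, is already known.

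Concretely, first I would reuse verbatim the vertex gadget, the enforcing gadget, and the analysis showing that any covering projection to the $k$-ring maps all black leaves of a vertex gadget to one endpoint of a double edge and all white leaves to the other. Next I would design an edge gadget parameterised by a ``desired offset'' $d \pmod{2k}$: take $k$ cycles $C^E_1, \ldots, C^E_k$ of length $2k$ glued by enforcing gadgets exactly as in Theorem~\ref{thm:rings}, but choose the two attachment vertices on $C^E_1$ at distance $d$ rather than at the fixed distance $2^{\alpha+1}$. The same propagation argument then shows that a covering projection forces the images of the two attached vertex gadgets to differ by exactly $d$ around the ring (and the parity constraint forces $d$ even). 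With these offset-$d$ edge gadgets in hand for all even $d$, together with lists that prescribe a handful of ``anchor'' vertex gadgets to specific double edges of the $k$-ring, we can realise any instance of a suitable list coloring / list homomorphism problem on a target of size $k$: place one vertex gadget per variable, and connect two variables by an offset-$d$ edge gadget whenever their values are required to differ by $d$.

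The last step is to name the NP-hard problem we reduce from. Since $\alpha \ge 3$ means $k \ge 8$, we may invoke the known NP-completeness of \textsc{List-$C_k$-Cover} (equivalently, a circular list-coloring / \textsc{List-$C_k$-Hom}-type problem on $C_k$) for cycles of these lengths, or alternatively the $k$-edge-colorability of $(k-1)$-uniform $k$-regular hypergraphs used in Proposition~\ref{prop:bipartiteproof}, which is NP-complete for every fixed $k\ge 3$. Using the latter, a vertex gadget per hypergraph-vertex records a ``color'' (which double edge it hits), the lists enforce that a distinguished gadget takes each color, and offset gadgets between the gadgets sharing a hyperedge enforce pairwise-distinctness of colors along each hyperedge; completing partial covering projections inside every copy of the component gadgets via property~(a)-type arguments yields the covering projection of $G'$ to the $k$-ring, and conversely. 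Finally, as in Theorem~\ref{thm:rings}, one checks $G'$ is simple and bipartite, completing the reduction.

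The main obstacle I anticipate is the offset edge gadget: in Theorem~\ref{thm:rings} the specific value $2^{\alpha+1}$ was chosen so that the number of layers $k$, the cycle length $2k$, and the offset interact cleanly (the ``two remaining degree-2 vertices'' argument and the consistency check between $C^E_1$ and $C^E_k$). For a general even offset $d$ one must re-verify that exactly two vertices of degree $2$ survive after inserting all enforcing gadgets, that they lie in the right places on $C^E_1$ and $C^E_k$, and that the shift between $C^E_1$ and $C^E_k$ still matches up after wrapping around the ring modulo $2k$; this is where the divisibility/parity bookkeeping is delicate and will need care, particularly to ensure the gadget is not accidentally forced (i.e.\ that \emph{some} valid offset-$d$ covering actually exists). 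Everything else — the vertex-gadget analysis, the list prescription of anchors, and the wrap-up that $G'$ is simple and bipartite — is a routine adaptation of the proof of Theorem~\ref{thm:rings}.
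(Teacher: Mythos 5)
Your high-level idea --- reuse the vertex, enforcing, and layered-cycle gadgets from Theorem~\ref{thm:rings} and compensate for the degeneration at $k=2^\alpha$ by adding lists --- is the right one, and the first branch you mention (reduce from a list homomorphism problem on $C_k$, pinning the leaves of each vertex gadget to the lists of the corresponding vertex) is essentially what the paper does: it reduces from \textsc{List-$C_k$-Hom}, which is NP-complete for these $k$ by the Feder--Hell--Huang dichotomy since $C_k$ with $k\ge 8$ is not a bi-arc graph, attaches the two vertex gadgets of an edge at positions $1$ and $3$ of $C^E_1$ (i.e.\ a fixed offset of $2$, encoding adjacency in $C_k$ as ``consecutive double edges of the ring''), and puts $\ell$ on the black leaves and $\ell'$ on the white leaves for each $\ell\in L(v)$. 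But be careful with the name of the source problem: \textsc{List-$C_k$-Cover} is \emph{not} NP-complete --- the paper itself shows it is polynomial-time solvable (Case~2A in the proof of Theorem~\ref{thm:minorresult}) --- so the hardness must come from \textsc{List-$C_k$-Hom}, and conflating the two leaves the reduction without a valid starting point as written.

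The route you actually elaborate --- hypergraph rainbow coloring with ``offset-$d$'' gadgets enforcing ``pairwise-distinctness of colors along each hyperedge'' --- has a genuine gap. An offset-$d$ edge gadget, by your own description and by the propagation analysis of Theorem~\ref{thm:rings}, forces the images of the two attached vertex gadgets to differ by \emph{exactly} $d$; no single such gadget expresses the disequality relation, which is the union of all nonzero offsets. In Proposition~\ref{prop:bipartiteproof} the rainbow condition is obtained for free from the local bijectivity of the covering at a \emph{simple} vertex $x$, but the $k$-ring has no simple (indeed no semi-simple) vertex --- every vertex lies on a double edge --- so that mechanism is unavailable here. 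To realize disequality you would need a disjunctive gadget (in the spirit of the $0$-$1$ gadget of Theorem~\ref{thm:4ring}) whose existence and soundness you would have to establish separately; as proposed, the hyperedge constraints are over-constrained (forcing fixed differences rather than mere distinctness), and the reduction does not go through. Sticking to the \textsc{List-$C_k$-Hom} route with the single offset $d=2$ avoids both this problem and the delicate bookkeeping you flag for general $d$.
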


\begin{proof}
By a seminal result of Feder et al.~\cite{feder1999list}, {\sc List-$H'$-Hom} is NP-complete if $H'$ is a so called bi-arc graph. By Corollary 3.1. therein, it follows that all cycles of size at least five are bi-arc graphs. Thus, we shall reduce from {\sc List-$C_k$-Hom}.

The construction of $(G',L')$ for a given $(G,L)$ being an instance of {\sc List-$C_k$-Hom} is almost the same as in Theorem~\ref{thm:rings}. We shall only describe differences here. Suppose that $C_k$ has its vertices named consecutively $1,2,\ldots,k$. Again as before, the $k$-ring will have its vertices denoted consecutively by $1,1',\ldots,k,k'$ with precisely edges $jj'$ being double for $j \in \{1,\ldots,k\}$.  
\begin{itemize}
  \item Suppose we have a vertex $v$ in $G$. Then for every $\ell \in L(v)$, we add $\ell$ to the lists of all black leaves of its vertex gadget and $\ell'$ to all lists of all white leaves. All of the other vertices in $G'$ will have full lists, i.e.\ all vertices of $k$-ring. Also the edges will have full lists.
  \item Vertex gadgets will be connected to the respective edge gadget in a slightly different way. For an edge $uv$ of $G$, we shall identify the black leaf of a pair of a vertex gadget for $u$ with first vertex of $C^E_1$ and the white leaf of the same pair with the $(1+k)$-th vertex of $C^E_k$ of the edge gadget for $uv$. We shall identify the black leaf of a pair of a vertex gadget for $v$  with the third vertex of $C^E_1$ and the white leaf of the same pair with the $(3+k)$-th vertex of $C^E_k$ of the edge gadget for $uv$.
\end{itemize} 

Clearly, under any possible list covering projection, all black leaves of a given vertex gadget choose simultaneously one vertex $\ell$ from their lists and subsequently $\ell'$ for all its white leaves (or vice versa).

Now the same analysis can be done as in Theorem~\ref{thm:rings} to show that there exists a list homomorphism of $(G,L)$ to $C_k$ if and only if there exists a list covering projection of $(G',L')$ to $k$-ring. This completes the proof.
\qed
\end{proof}

\begin{theorem} \label{thm:4ring}
The \textsc{4-ring-Cover} problem is NP-complete for simple input graphs.
\end{theorem}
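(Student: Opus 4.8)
Let $R_4$ denote the $4$-ring. Membership in NP is immediate, exactly as for \textsc{List-$H$-Cover}: guess a mapping on vertices and edges and check the local-bijectivity conditions in polynomial time, so only NP-hardness needs an argument. The $4$-ring is precisely the ring that falls through the cracks of everything established so far: $4$ has no odd divisor greater than~$1$, so there is no nontrivial cyclic shift for a cover to ``detect'' and Theorem~\ref{thm:rings} does not apply; $4<8$, and \textsc{List-$C_4$-Hom} is polynomial-time solvable, so the route of Theorem~\ref{thm:ringspowersoftwo} is closed; and $R_4$ has no semi-simple vertex, so Theorem~\ref{thm:mainresult} is of no help either. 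Hence a fresh source problem is needed. I would reduce from $3$-edge-colourability of cubic multigraphs -- equivalently, the $k=3$ instance of the problem used in Proposition~\ref{prop:bipartiteproof} (a $(k-1)$-uniform $k$-regular hypergraph is, for $k=3$, just a cubic multigraph), which is NP-complete by~\cite{n:KPT97}; an alternative natural starting point is \textsc{$2$-ring-Cover}, NP-complete by Theorem~18 of~\cite{n:BFHJK-MFCS}.

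First I would record the shape of an arbitrary covering projection $f\colon G\to R_4$ for a simple graph $G$. Letting $N$ be the perfect matching of $R_4$ consisting of its four single edges, $f^{-1}(N)$ is a perfect matching of $G$, and $G-f^{-1}(N)$ is a disjoint union of even cycles, each of length at least~$4$ (since $G$ is simple) and each lying over one of the four double edges; these cycles split into four ``layers'' arranged cyclically, consecutive layers being joined by $f^{-1}(N)$ according to the rigid local incidence pattern of the ring -- a vertex over $j$ has two neighbours over $j'$ and one over $(j-1)'$, and dually over~$j'$. A complementary viewpoint, handy for the easy direction, is to use Proposition~\ref{prop:sau_and_rings} together with the extension of~\cite{n:FK08} quoted in the proof of Theorem~\ref{thm:mainresult}: a simple graph covers $R_4 = S_4 \times K_2$ if and only if it is bipartite and covers a fixed (small) $4$-sausage $S_4$.

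Since $R_4$ is $3$-regular and (being bipartite) $3$-edge-colourable, Propositions~\ref{prop:multicoverexistence} and~\ref{prop:multicoverproperties} apply to it, so the multicover building block $G_u$ is available. I would then construct $G'$ from an instance of the source problem by local replacements in the spirit of Section~\ref{sec:sausages} and Proposition~\ref{prop:bipartiteproof}: an \emph{enforcing gadget} whose two distinguished leaves, in any cover of $R_4$, must be sent to the two ends of one double edge (hence to some $i$ and $i'$) and which therefore lets us fix relative positions and, crucially, orient the double edges; a \emph{vertex gadget} built around a copy of $G_u$, forcing all of its leaves to a common ``state''; and \emph{edge gadgets} wiring the vertex gadgets together so that a (list-free) covering projection of $G'$ to $R_4$ exists if and only if the states encode a proper $3$-edge-colouring (respectively, a valid cover of the $2$-ring). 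The forward direction then reads the colouring off the cover using the layered description above, and the converse builds the cover layer by layer, the sausage picture keeping the bookkeeping under control.

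The main obstacle -- and the reason the $4$-ring deserves a separate theorem -- is that a cover of $R_4$ is considerably less rigid than a cover of a ring with an odd divisor. Because $R_4$ has no simple vertex, at every vertex of $G'$ mapped to a fixed target $x$ the three gadget-neighbours are forced into the pattern ``two onto $x'$, one onto the single-edge neighbour of $x$'', so the ``all neighbours distinct'' leverage used throughout Section~\ref{subsec:reduction} is unavailable; in particular the hyperedge-gadget analysis of Proposition~\ref{prop:bipartiteproof} breaks exactly when a ``colour'' is the double-edge neighbour of $x$. The way around this is (i) to phrase the encoded choices in terms of the three \emph{edges} incident with $x$ -- which are still mapped bijectively onto the three edges at the image of $x$, by local bijectivity on edges -- rather than in terms of the neighbours of $x$, and (ii) to insert enough enforcing gadgets to orient every relevant double edge, so that the freedom that genuinely survives in a cover (which layer a cycle lies in, the alternation type along the cycle, and which of two parallel edges is used) carries exactly the information of the source instance. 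Proving that this leaves no room for spurious covers is the technical heart of the argument.
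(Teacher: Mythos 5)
Your analysis of why the $4$-ring escapes all the previously established machinery is accurate, and your structural description of a cover of $R_4$ (a perfect matching over the four single edges plus four ``layers'' of even cycles over the double edges) is correct. But the proposal stops exactly where the proof has to begin: you never exhibit the gadgets, and you explicitly defer ``proving that this leaves no room for spurious covers,'' which is the entire content of the theorem. The paper's proof supplies precisely the missing pieces: it reduces from \textsc{4-Coloring} (not from $3$-edge-colourability of cubic multigraphs or from \textsc{$2$-ring-Cover}), encodes the four colours as the four double edges (layers) of the ring, and builds two concrete shift gadgets --- a \emph{1-gadget} whose right terminals are forced from left images $i,i'$ to $i+1,(i+1)'$, and a \emph{0-1-gadget} allowing a shift of $0$ or $1$ --- so that the edge gadget (one 1-gadget followed by two 0-1-gadgets, interleaved with vertex gadgets and stabilised by enforcing gadgets) realises exactly the shifts $1,2,3$ modulo $4$ and never $0$, i.e.\ forces adjacent vertices of $G$ to receive distinct layers. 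Nothing in your outline plays the role of these gadgets, and without them there is no reduction to verify.

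Beyond incompleteness, two of your concrete suggestions would fail. First, you propose to phrase the encoded choices in terms of the three edges incident with a vertex and to let ``which of two parallel edges is used'' carry information; but swapping the two parallel edges of a double edge of $R_4$ is an automorphism of $R_4$, so in the list-free problem every cover composed with that automorphism is again a cover and this degree of freedom can never encode anything --- which is exactly why the paper retreats to encoding information only in the layer index, where the residual structure is ``adjacent vertices get different layers,'' i.e.\ \textsc{4-Coloring}. Second, your fallback on Proposition~\ref{prop:bipartiteproof}-style vertex gadgets built from $G_u$ needs property (c) of Proposition~\ref{prop:multicoverproperties} to separate the images of the neighbours of a prescribed vertex, and, as you yourself note, this gives no leverage at a vertex of $R_4$ because two of its three incident edges lead to the same neighbour; acknowledging the obstacle is not the same as overcoming it. The paper's vertex gadget for rings (the doubled-cycle construction from Theorem~\ref{thm:rings}) together with the 1- and 0-1-gadgets is the actual workaround, and it is absent from your argument.
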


\begin{proof}
The case of $4$-rings needs a special ad hoc construction. However, the ideas are very similar to the previous ones. In order to not repeat ourselves, we shall describe a sketch of the reduction with the help of Figure~\ref{fig:4ring} (referred to as figure in the rest of the proof). This time, we shall reduce from the problem of \textsc{4-Coloring}.

\begin{figure}
\centering
{\includegraphics[width=\textwidth]{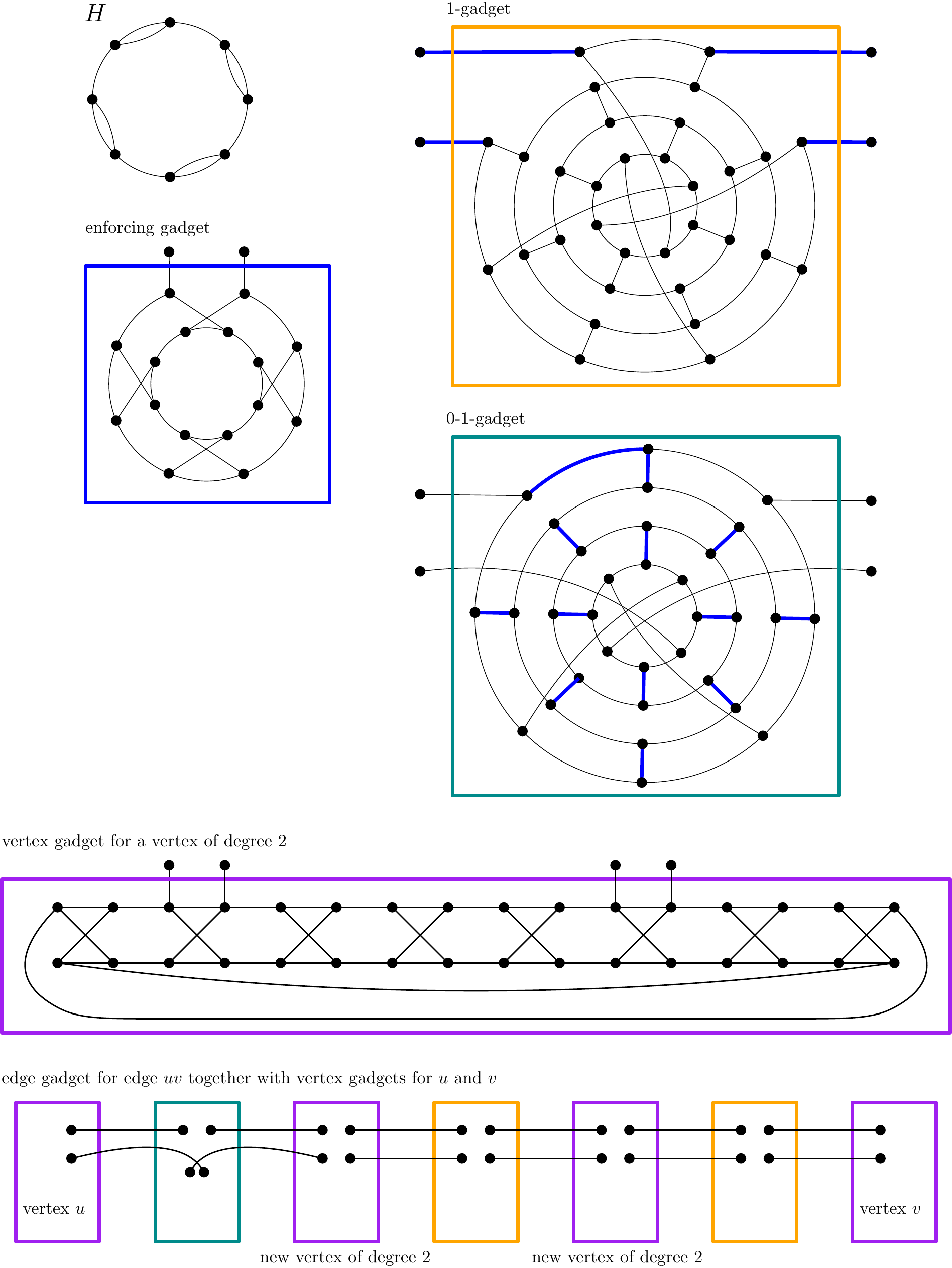}}
\caption{An example of the construction for $H$ being 4-ring.}
\label{fig:4ring}
\end{figure}

Suppose we have a graph $G$ as an instance of \textsc{4-Coloring}. We will construct $G'$ as an instance of \textsc{4-ring-Cover} in the following way. First, we shall insert into $G'$ vertex gadgets, the same ones as in the previous reductions on rings as one can see in the figure.

We introduce two new gadgets that will form our edge gadgets here: \emph{1-gadget} and \emph{0-1-gadget}. We shall also use the enforcing gadget we introduced in Theorem~\ref{thm:rings}. The figure shows examples of explicit constructions of such gadgets. For simplicity, let us denote the vertices outside the respective boxes of gadgets (which are technically in the gadgets) as \emph{terminal vertices} or \emph{terminals}.

Let us now discuss the effect of the newly introduced gadgets on its terminal vertices.
\begin{itemize}
  \item 1-gadget: If the left terminal vertices have $i$ and $i'$ as images, the right terminal vertices will have $i+1$ and $(i+1)'$ as their images (where numbers are taken modulo 4).
  \item 0-1 gadget: This gadget has a similar effect. If the left terminals have $i$ and $i'$, the right terminals will have either $i$ and $i'$, or $i+1$ and $(i+1)'$ as their images.
\end{itemize}
We omit the case analysis showing that these gadgets and their terminals indeed admit only the aforementioned images. Observe that the enforcing gadgets are vital here and they also significantly simplify that analysis.

Put all together into the edge gadget (depicted in the lower part of the figure), the effect is the following:
\begin{enumerate}
  \item The vertex gadget on the left has, without loss of generality, 1 and 1' on its terminal vertices.
  \item Next, 1-gadget assures that its other terminal vertices will get 2 and 2'.
  \item The next gadget is a vertex gadget. It copies the value 2 and 2'.
  \item Now 0-1 gadget ensures that the other terminal vertices are either 2 and 2', or 3 and 3'.
  \item This is repeated once again, resulting in terminal vertices of the final vertex gadget for the vertex $v$ not having 1 and 1' as images.
\end{enumerate}
It is important to note that the edge gadget is symmetric in the sense that we could in the beginning set the image of the vertex gadget for $v$ and the rest would propagate in the similar manner, so that the terminal vertices of $u$ do not have the same ``color''.

The whole construction is again bipartite and the colors of $G$ are encoded as the pairs $i,i'$ of vertices of 4-ring, where $i \in \{1,2,3,4\}$. From the construction of the vertex gadget, we may assume that the white terminal vertices of vertex gadgets have the non-primed vertices of 4-ring as their images.

From the preceding discussion, it is clear that if $G'$ covers 4-ring, then $G$ is 4-colorable.

On the other hand, suppose that $G$ is 4-colorable. Then set the terminal vertices of vertex gadgets of $G'$ according to the coloring. We already know from the previous reductions that a vertex gadget can be labeled so that the definition of covering projection is not violated. Also, it is now clear that the vertices of ``edge gadgets'' of $G'$ can be labeled independently on each other  so that the $G'$ in the end covers 4-ring. This completes the sketch of the reduction. \qed
\end{proof}

The following observation shows that hardness for $k$-rings implies the hardness for $k$-sausages.

\begin{corollary}
\label{cor:sau_to_rings}
For every $k\ge 2$ and every $k$-sausage $S_k$, {\sc $k$-ring-Cover} $\varpropto$ {\sc $S_k$-Cover} and {\sc List-$k$-ring-Cover} \textcolor{blue}{$<_T$} {\sc List-$S_k$-Cover}. 
\end{corollary}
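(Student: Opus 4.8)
The plan is to exploit Proposition~\ref{prop:sau_and_rings}, which provides an isomorphism $S_k\times K_2\cong k\text{-ring}$, together with the fact (established in~\cite{n:FK08} and already used in the proof of Theorem~\ref{thm:mainresult}, where it was observed to extend to targets with loops, semi-edges and multiple edges) that a simple graph $G$ covers $H\times K_2$ if and only if $G$ is bipartite and $G$ covers $H$. Taking $H=S_k$ yields the basic equivalence: \emph{a simple graph $G$ covers the $k$-ring if and only if $G$ is bipartite and $G$ covers $S_k$}. Both claimed reductions are refinements of this observation. (Recall that covering a bipartite graph forces the source to be bipartite, since a covering projection lifts the odd closed walk witnessing an odd cycle.)

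For the many-one reduction \textsc{$k$-ring-Cover} $\varpropto$ \textsc{$S_k$-Cover}, given a simple input graph $G$ I would first test in polynomial time whether $G$ is bipartite. If it is not, the algorithm outputs a fixed simple graph that cannot cover the cubic graph $S_k$ (for instance $K_2$, whose valency is $1\ne 3$). If $G$ is bipartite, the algorithm outputs $G$ itself. Correctness is immediate from the basic equivalence, and simplicity of the input is preserved.

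For the list version I would give a Turing reduction \textsc{List-$k$-ring-Cover} $<_T$ \textsc{List-$S_k$-Cover}; a many-one reduction is not available because a list covering projection onto $S_k$ admits, in general, two lifts to $S_k\times K_2$ per connected component, and the prescribed lists need not be invariant under the deck transformation that swaps the two $K_2$-layers. The reduction works component by component: an instance $(G,\mathcal{L})$ of \textsc{List-$k$-ring-Cover} is a yes-instance if and only if every connected component of $G$, carrying the induced lists, is a yes-instance, so fix a component $C$. If $C$ is not bipartite it is a no-instance (the $k$-ring is bipartite). If $C$ is bipartite with classes $X$ and $Y$, use the identification of the $k$-ring with $S_k\times K_2$ and the projection $\pi\colon S_k\times K_2\to S_k$. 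Any list covering projection $f\colon C\to S_k\times K_2$ sends one of $X,Y$ entirely into the layer $V(S_k)\times\{0\}$ and the other entirely into $V(S_k)\times\{1\}$; in the first case $\pi\circ f$ is a list covering projection of $C$ onto $S_k$ respecting the lists $\mathcal{L}'$ obtained by restricting, for $v\in X$, to $\pi\bigl(L_v\cap(V(S_k)\times\{0\})\bigr)$ and, for $v\in Y$, to $\pi\bigl(L_v\cap(V(S_k)\times\{1\})\bigr)$ (and doing the analogous restriction for the edge lists), and conversely every list covering projection of $C$ onto $S_k$ respecting $\mathcal{L}'$ lifts back to one onto the $k$-ring respecting $\mathcal{L}$, because $\pi$ is bijective on each layer. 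The second case yields the symmetric list assignment $\mathcal{L}''$, obtained by interchanging the two layers. Hence $C$ is a yes-instance of \textsc{List-$k$-ring-Cover} if and only if $(C,\mathcal{L}')$ or $(C,\mathcal{L}'')$ is a yes-instance of \textsc{List-$S_k$-Cover}, which costs two oracle queries; over all components at most $2|V(G)|$ queries are made, so the reduction is polynomial.

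The main obstacle is the bookkeeping hidden in the last paragraph: one must make the isomorphism $S_k\times K_2\cong k\text{-ring}$ explicit enough to see precisely how the vertex- and edge-lists on the $k$-ring decompose along the two $K_2$-layers (the double edges $jj'$ of the $k$-ring, and the loops or semi-edges at the two ends of $S_k$, deserve individual attention), and to verify that, on a connected bipartite graph, a list covering projection onto $S_k$ together with a fixed layer-assignment lifts to a unique covering projection onto the $k$-ring that respects the restricted lists. Everything outside this verification is routine, and the two-fold choice of layer-assignment per component is exactly the reason a Turing rather than a many-one reduction is used for the list variant.
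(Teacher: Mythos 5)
Your proposal is correct and follows essentially the same route as the paper: the equivalence ``$G$ covers $H\times K_2$ iff $G$ is bipartite and covers $H$'' applied to $S_k\times K_2\cong k$-ring gives the trivial many-one reduction for the plain problem, and a two-branch Turing reduction (one oracle call per choice of layer assignment, with lists restricted to the corresponding layer and projected via $\pi$) handles the list version. Your component-by-component treatment is a slightly more careful variant of the paper's two global calls, and your explicit flagging of the edge-list bookkeeping around double edges, loops, and semi-edges matches the level of detail the paper itself leaves to the reader.
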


\begin{proof}
A graph $G$ covers $H\times K_2$ if and only if it is bipartite and covers $H$. Since bipartiteness can be tested in polynomial time, testing if $G$ covers the $k$-ring polynomially reduces to testing if $G$ covers $S_k$. 

With the list version, one has to be a bit more explicit, \textcolor{blue}{and we use a Turing reduction in this case, with two calls of {\sc List-$S_k$-Cover} in the reduction.} Let $S_k=(V,\Lambda)$, $V(K_2)=\{b,w\}$ and let the $k$-ring be denoted by $R_k$, with $V(R_k)=\{(u,\alpha):u\in V, \alpha\in\{b,w\}\}$. After checking that the input graph $G$ of the {\sc List-$R_k$-Cover} problem is bipartite, let $V(G)=A\cup B$ be the bipartition of $G$. In a feasible covering projection, either the vertices of $A$ are mapped onto the vertices of $V\times\{w\}$ and the vertices of $B$ onto $V\times\{b\}$, or vice-versa. We try these two possibilities separately. For trying the former one, reduce first the lists to $L'_u=L_u\cap (V\times\{w\})$ for $u\in A$ and $L'_u=L_u\cap (V\times\{b\})$ for $u\in B$, and adjust the lists for edges accordingly. Then regard $G$ as an instance of {\sc List-$S_k$-Cover} with the lists $\widetilde{L}_u=\{x:(x,w)\in L'_u\mbox{ or }(x,b)\in L'_u\}$, and lists for edges being adjusted accordingly. It is not difficult to see that $G$ allows a covering projection onto $R_k$ that respects ${\cal L}'$ if and only if it allows a covering projection onto $S_k$ that respects $\widetilde{\cal L}$. Check the latter possibility in a similar way and conclude that $G, {\cal L}$ is a feasible instance of {\sc List-$R_k$-Cover} if and only if at least one of the $G, \widetilde{\cal L}$ instances is feasible for the corresponding {\sc List-$S_k$-Cover} problem.  
\qed \end{proof}
\textcolor{blue}{
\section{Strong Dichotomy for regular subcubic graphs}}\label{sec:cubic}

In this section we prove the list version of the Strong Dichotomy Conjecture for regular graphs of valency at most 3. 

\medskip
\begin{proof}[\textcolor{blue}{
{\em of Theorem~\ref{thm:minorresult}}}]
\textcolor{blue}{
There are two connected 1-regular graphs -- $F(1,0)$, the one-vertex graph with a single semi-edge, and $K_2$, the simple complete graph on 2 vertices. Each of them has a finite number of possible covers ($F(1,0)$ and $K_2$ are the only covers of $F(1,0)$, and $K_2$ is the only cover of $K_2$ itself), and thus the list covering problem is solvable in constant time for each of them.}

\textcolor{blue}{
There are two types of connected 2-regular graphs -- cycles and open paths (a path is {\em open} if it starts and ends with semi-edges, all inner edges are ordinary ones and all vertices have degree 2). For the sake of better understanding, we will treat one-vertex graphs separately.}

\textcolor{blue}{
{\em Case 1A - $H=F(0,1)$, the one-vertex graph with a loop.} The only candidates for the covering graph are the cycles (including loops and digons). Since $H$ has only one vertex and only one edge (the loop), the input graph list-covers this $H$ if and only if it is a cycle and all the lists are non-empty.}

\textcolor{blue}{
{\em Case 1B - $H=F(2,0)$, the one-vertex graph with 2 semi-edges incident with its vertex.} The only candidates for the covering graph are cycles of even length and open paths, and in every covering projection the mapping of the edges of the covering graph alternate between the two semi-edges of $H$. Let $x$ be the vertex of $H$ and $a,b$ be its two semi-edges.  If a cycle $G=C_{2h}=(u_1,e_1,u_2,e_2,\ldots,u_{2h},e_{2h})$ is an input graph and $L$ the input list function, we check whether $L(u_i)=\{x\}$ for all $i=1,2,\ldots,2h$, and whether $a\in L(e_i)$ for $i=1,3,\ldots,2h-1$ and    
$b\in L(e_i)$ for $i=2,4,\ldots,2h$, or  $b\in L(e_i)$ for $i=1,3,\ldots,2h-1$ and    
$a\in L(e_i)$ for $i=2,4,\ldots,2h$. The answer to the list-covering question is {\sf yes} in the affirmative case, and {\sf no} otherwise.}

\textcolor{blue}{
If an open path $G=P_n=(e_1,u_1,e_1,\ldots,e_{n-1},u_n,e_n)$ (where $e_0$ and $e_n$ are semi-edges and all other $e_i$'s are ordinary edges) is an input graph and $L$ the input list function, we check whether $L(u_i)=\{x\}$ for all $i=1,2,\ldots,n$, and whether $a\in L(e_i)$ for $i=1,3,\ldots$ and    
$b\in L(e_i)$ for $i=2,4,\ldots$, or  $b\in L(e_i)$ for $i=1,3,\ldots$ and    
$a\in L(e_i)$ for $i=2,4,\ldots$. The answer to the list-covering question is {\sf yes} in the affirmative case, and {\sf no} otherwise.}   

\textcolor{blue}{
{\em Case 2A - $H=C_{t}=(x_1,e_1,x_2,e_2,\ldots,x_{t},e_{t})$.} The only candidates for the covering graph are cycles of lengths divisible by $t$. We reject the input as infeasible if the input graph $G$ is not such a cycle. If it is $G=C_{ht}=(u_1,a_1,\ldots,u_{ht},a_{ht}$, there are $2t$ candidates for a covering projection - $u_1$ can be mapped onto one of the $t$ vertices $x_1,\ldots,x_t$ of the target graph, and the cycle itself can ``wind around" $H$ either clock-wise or counter-clock-wise. For each of these $2t$ possibilities we check whether the mapping of vertices and edges complies with the lists (e.g., if we check the mapping of $u_1$ onto $x_j$ and wind the cycle clock-wise, we check if $x_{j+i-1}\in L(u_i)$ and $e_{j+i-1}\in L(a_i)$ for every $i=1,2,\ldots,ht$, with counting in the subscripts being modulo $t$ for the vertices and edges of $H$). We accept the input as feasible if at least one of these $2t$ cases complies with the lists, and reject otherwise. The running time is $O(tht)=O(n)$, since $t$ is a constant parameter.}

\textcolor{blue}{
{\em Case 2B - $H$ is an open path $P_t=e_0,x_1,e_1,\ldots,x_t, e_t$  with $e_0$ and $e_t$ being semi-edges (incident with  $u_1$ and $u_t$, respectively).} The only candidates for the covering graph are cycles and open paths. If the input graph is a cycle $G=C_n=(u_1,a_1,u_2,a_2,\ldots,u_n,a_n)$, a necessary condition for covering $H$ is that $n$ is a multiple of $2t$. In such a case, there are $2t$ possible covering projections, uniquely determined by the mapping of $u_1$ and $u_2$ (the vertices and edges of the cycle $C_n$ must map cyclically onto $x_1,e_1,x_2,e_2,\ldots,x_t,e_t,x_t,e_{t-1},x_{t-1},\ldots,x_1,e_0,x_1,e_1,x_2,\ldots$ in this order, or in the opposite one). For each of these $2t$ possibilities, we check if the lists $L(u_i)$ and $L(a_i)$ contain the corresponding $x$'s and $e$'s.}

\textcolor{blue}{
If the input graph is an open path $G=P_n=(a_0,u_1,a_1,u_2,a_2,\ldots,u_n,a_n)$, a necessary condition for covering $H$ is that $n$ is a multiple of $t$. In such a case, there are two possible covering projections, uniquely determined by the mapping of $u_1$  (either $a_0$ is mapped onto $e_0$ and then the path is mapped as $e_0,x_1,e_1,\ldots,x_t,e_t,x_t,e_{t-1},\ldots$, or $a_0$ is mapped onto $e_t$ and then the path is mapped as $e_t,x_t,e_{t-1},\ldots,x_1,e_0,x_1,e_1,\ldots$). For each of these two possibilities, we check if the lists $L(u_i)$ and $L(a_i)$ contain the corresponding $x$'s and $e$'s.   
\qed}
\end{proof}

\begin{proof}[\textcolor{red}{ {\em of Theorem~\ref{thm:cubicdichotomy}}}]
The proof is divided into several cases, depending on the structure of $H$.

\smallskip
\noindent {\em Case 1: We have $|V(H)|=1$.} We distinguish two subcases.

\smallskip
\noindent{\em Case 1A - The graph $H$ has one semi-edge and one loop.}
\textcolor{red}{We show that in this case, {\sc List-$H$-Cover} can be solved in polynomial time for arbitrary input graphs.}
The preimage of the semi-edge should be a disjoint union of the semi-edges of the input graph $G$ and of a perfect matching on the vertices not incident to a semi-edge.
Then the remaining edges of $G$ form a spanning collection of cycles (including loops) which form the preimage of the loop.
The existence of a spanning subgraph of $G$  that is a preimage of the semi-edge can be tested in polynomial time.

If lists are present as part of the input, the situation gets a little more tricky.
We start with a preprocessing phase. We check the below conditions:
\begin{enumerate}[label=(\alph*)]
\item $G$ has a vertex or an edge with an empty list.
\item $G$ has a vertex incident to two or more semi-edges,
\item $G$ has a semi-edge whose list does not contain the semi-edge of $H$,
\item $G$ has a vertex incident to a semi-edge and an edge, whose list does not contain the loop of $H$,
\item $G$ has a vertex incident to two ordinary edges, whose lists do not contain the loop of $H$,
\item $G$ has a loop whose list does not contain the loop of $H$.
\end{enumerate}
It is clear that if any of the above conditions is satisfied, then $(G,\cal{L})$ is a no-instance. Thus we reject and quit.

Now we shall construct an auxiliary graph $G'$.
We start our construction with $G$ and perform the following steps.
\begin{enumerate}
\item If some vertex $v$ is incident to a semi-edge, then delete $v$ with all its edges.
\item If some edge $e$ does not have the semi-edge of $H$ in its list, remove $e$ from the graph.
\item If some edge $e$ does not have the loop of $H$ in the list, leave $e$, but remove all edges incident to $e$.
\end{enumerate}
Let $G'$ be the graph after the exhaustive application of steps 1, 2, and 3.
It is straightforward to verify that steps 1 and 2 ensure that the union of a perfect matching in $G'$ and the semi-edges removed in step 1.~can be a preimage of the semi-edge of $H$.
Furthermore, by step 3 we ensure that if some edge has to be mapped to the semi-edge, then it will be so.

We can verify in polynomial time if $G'$ has a perfect matching. If not, we reject and quit.
So let $M$ be a perfect matching in $G'$, and let $M'$ be the union of $M$ and the set of semi-edges removed in step 1.
Observe that the graph $G-M$ is 2-regular, in other words a disjoint union of cycles (including loops).
Furthermore, every edge of $G-M$ has the loop of $H$ in its list, this is guaranteed by step 3 and the preprocessing phase.
Thus in this case we report a yes-instance.

\smallskip
\noindent{\em Case 1B: The graph $H$ has three semi-edges.} In this case already {\sc $H$-Cover}  is NP-complete, as it is equivalent to 3-edge-colorability of cubic graphs.

\smallskip
\noindent {\em Case 2: We have $|V(H)|=2$.} If $H$ has neither loops nor semi-edges, then $H$ is a  bipartite graph formed by a triple edge between two vertices. Only bipartite graphs can cover a bipartite one. Hence a covering projection corresponds to a 3-edge-coloring of the input graph. Thus {\sc $H$-Cover} is polynomial-time solvable (every cubic bipartite graph is 3-edge-colorable), but {\sc List-$H$-Cover} is NP-complete, because \textcolor{red}{\textsc{List 3-Edge-Coloring} is NP-complete for  cubic bipartite graphs~\cite{fiala2003np}.} If $H$ has a loop or a semi-edge, then it is one of the four graphs in Figure~\ref{fig:twovertexgraphs}, and for each of these already the {\sc $H$-Cover} problem is NP-complete~\cite{n:BFHJK-MFCS}.
\begin{figure}
\centering
\includegraphics{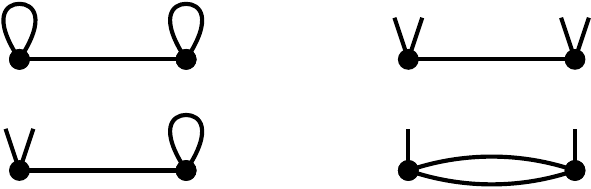}
\caption{The non-bipartite 2-vertex graphs.}
\label{fig:twovertexgraphs}
\end{figure}

\smallskip
\noindent {\em Case 3: $|V(H)| \geq 3$.} Here we split into several subcases.

\begin{itemize}
\item {\em Case 3A: The graph $H$ is acyclic.} If we shave all semi-edges off from $H$, we get a tree with at least three vertices. At least one of them has degree greater than 1, and such vertex is semi-simple in $H$. Thus {\sc List-$H$-Cover} is NP-complete by Theorem~\ref{thm:mainresult}, as we remarked earlier.

\item {\em Case 3B: The graph $H$ has a cycle of length greater than 2 which does not span all of its vertices.} Then $H$ has a vertex outside of this cycle, and thus $H$ has a semi-simple vertex and {\sc List-$H$-Cover} is NP-complete.

\item {\em Case 3C: The graph $H$ has a cycle of length greater than 2 with a chord.} Then again $H$ has a semi-simple vertex and {\sc List-$H$-Cover} is NP-complete.

\item {\em Case 3D: The graph $H$ has a cycle of length greater than 2, but none of the previous cases apply.} Then $H$ is the $k$-ring for some $k\ge 2$. If $k=2$, {\sc 2-ring-Cover} is NP-complete by~\cite{n:BFHJK-MFCS}. If $k \neq 2^{\alpha}$ for every $\alpha \geq 1$, {\sc $k$-ring-Cover} is NP-complete by Theorem~\ref{thm:rings}. In the case of $k=2^{\alpha}$ with $\alpha\ge 3$, the {\sc List-$k$-ring-Cover} problem is NP-complete by Theorem~\ref{thm:ringspowersoftwo}. For $k=4$, {\sc 4-ring-Cover} is NP-complete by Theorem~\ref{thm:4ring}.

\item {\em Case 3E - $H$ has a cycle, but all cycles are of length one or two.} 
If $H$ has a semi-simple vertex, then {\sc List-$H$-Cover} is NP-complete by Theorem~\ref{thm:mainresult}.
If $H$ has no semi-simple vertex, then $H$ is a $k$-sausage for some $k\ge 2$. The NP-completeness of {\sc List-$H$-Cover} follows from Case 3D via Corollary~\ref{cor:sau_to_rings}. \qed
\end{itemize}
\end{proof}

\section{Concluding remarks}\label{sec:concl}

We have studied the complexity of the \textsc{List-$H$-Cover} problem in the setting of graphs with multiple edges, loops, and semi-edges for regular target graphs. We have shown in Theorem~\ref{thm:mainresult} a general hardness result under the assumption that the target graph contains at least one semi-simple vertex. It is worthwhile to note that in fact we have proved the NP-hardness for the more specific \textsc{$H$-Precovering Extension} problem, when all the lists are either one-element, or full. Actually, we proved hardness for the even more specific \textsc{Vertex $H$-Precovering Extension} version, when only vertices may come with prescribed covering projections, but all edges have full lists.

On the contrary, the nature of the NP-hard cases that appear in the characterization of the complexity of \textsc{List-$H$-Cover} of cubic graphs given by Theorem~\ref{thm:cubicdichotomy} is more varied. Some of them are NP-hard already for \textsc{$H$-Cover}, some of them are NP-hard for \textsc{$H$-Precovering Extension}, but apart from the \textsc{Vertex $H$-Precovering Extension} version in applications of Theorem~\ref{thm:mainresult}, this time we also utilize the \textsc{Edge $H$-Precovering Extension} version for the case of the bipartite 2-vertex graph formed by a triple edge between two vertices. Finally, for the cases of sausages and rings whose length is a power of two, nontrivial lists are required to make our proof technique work. 

\subsection*{Acknowledgments}
Jan Bok and Nikola Jedličková were supported by research grant GAČR 20-15576S of the Czech Science Foundation and by SVV--2020--260578 and GAUK 1580119. Jiří Fiala and Jan Kratochvíl were supported by research grant GAČR 20-15576S of the Czech Science Foundation. Pawe\l{} Rz\k{a}\.zewski was supported by the Polish National Science Centre grant no. 2018/31/D/ST6/00062.
The last author is grateful to Karolina Okrasa and Marta Piecyk for fruitful and inspiring discussions.

\subsection*{Statement on confict of interest}
We declare that the authors have no competing interests as defined by Springer, or other interests that might be perceived to influence the results and/or discussion reported in this paper.

\bibliography{bib/knizky,bib/nakryti,bib/sborniky,0-main}

\begin{thebibliography}{10}
\providecommand{\url}[1]{\texttt{#1}}
\providecommand{\urlprefix}{URL }
\providecommand{\doi}[1]{https://doi.org/#1}

\bibitem{n:AFS91}
Abello, J., Fellows, M.R., Stillwell, J.C.: On the complexity and combinatorics
  of covering finite complexes. Australian Journal of Combinatorics
  \textbf{4},  103--112 (1991)

\bibitem{n:Angluin80}
Angluin, D.: Local and global properties in networks of processors. Proceedings
  of the 12th ACM Symposium on Theory of Computing pp. 82--93 (1980)

\bibitem{n:BardBDMY18}
Bard, S., Bellitto, T., Duffy, C., MacGillivray, G., Yang, F.: Complexity of
  locally-injective homomorphisms to tournaments. Discret. Math. Theor. Comput.
  Sci.  \textbf{20}(2) (2018), \url{http://dmtcs.episciences.org/4999}

\bibitem{k:Biggs74}
Biggs, N.: Algebraic Graph Theory. Cambridge University Press (1974)

\bibitem{DBLP:conf/tamc/BilkaLT11}
B{\'{\i}}lka, O., Lidick{\'{y}}, B., Tesar, M.: Locally injective homomorphism
  to the simple weight graphs. In: Ogihara, M., Tarui, J. (eds.) Theory and
  Applications of Models of Computation - 8th Annual Conference, {TAMC} 2011,
  Tokyo, Japan, May 23-25, 2011. Proceedings. Lecture Notes in Computer
  Science, vol.~6648, pp. 471--482. Springer (2011).
  \doi{10.1007/978-3-642-20877-5\_46},
  \url{https://doi.org/10.1007/978-3-642-20877-5\_46}

\bibitem{n:BilkaJKTV11}
B{\'{\i}}lka, O., Jir{\'{a}}sek, J., Klav{\'{\i}}k, P., Tancer, M., Volec, J.:
  On the complexity of planar covering of small graphs. In: Kolman, P.,
  Kratochv{\'{\i}}l, J. (eds.) Graph-Theoretic Concepts in Computer Science.
  Lecture Notes in Computer Science, vol.~6986, pp. 83--94. Springer (2011)

\bibitem{n:BFHJK-MFCS}
Bok, J., Fiala, J., Hlin\v{e}n{\'{y}}, P., Jedli\v{c}kov{\'{a}}, N.,
  Kratochv{\'{\i}}l, J.: Computational complexity of covering multigraphs with
  semi-edges: Small cases. In: Bonchi, F., Puglisi, S.J. (eds.) 46th
  International Symposium on Mathematical Foundations of Computer Science,
  {MFCS} 2021, August 23-27, 2021, Tallinn, Estonia. LIPIcs, vol.~202, pp.
  21:1--21:15. Schloss Dagstuhl - Leibniz-Zentrum f{\"{u}}r Informatik (2021),
  \url{https://doi.org/10.4230/LIPIcs.MFCS.2021.21}

\bibitem{BFJKS21FCT}
Bok, J., Fiala, J., Jedli{\v{c}}kov{\'a}, N., Kratochv{\'i}l, J.,
  Seifrtov{\'a}, M.: Computational complexity of covering disconnected
  multigraphs. In: Bampis, E., Pagourtzis, A. (eds.) Fundamentals of
  Computation Theory. pp. 85--99. Springer International Publishing, Cham
  (2021)

\bibitem{iwoca}
Bok, J., Fiala, J., Jedli{\v{c}}kov{\'{a}}, N., Kratochv{\'{\i}}l, J.,
  Rz{\k{a}}{\.z}ewski, P.: List covering of regular multigraphs. In: Bazgan,
  C., Fernau, H. (eds.) Combinatorial Algorithms - 33rd International Workshop,
  {IWOCA} 2022, Trier, Germany, June 7-9, 2022, Proceedings. Lecture Notes in
  Computer Science, vol. 13270, pp. 228--242. Springer (2022).
  \doi{10.1007/978-3-031-06678-8\_17},
  \url{https://doi.org/10.1007/978-3-031-06678-8\_17}

\bibitem{n:ChaplickFHPT15}
Chaplick, S., Fiala, J., van~'t Hof, P., Paulusma, D., Tesa\v{r}, M.: Locally
  constrained homomorphisms on graphs of bounded treewidth and bounded degree.
  Theoretical Computer Science  \textbf{590},  86--95 (2015)

\bibitem{n:Djokovic74}
Djokovi\'c, D.{\v{Z}}.: Automorphisms of graphs and coverings. Journal of
  Combinatorial Theory B  \textbf{16},  243--247 (1974)

\bibitem{DBLP:journals/corr/abs-2202-12438}
Dvo\v{r}{\'{a}}k, P., Krawczyk, M., Masa\v{r}{\'{\i}}k, T., Novotn{\'{a}}, J.,
  Rz\k{a}\.zewski, P., \.Zuk, A.: List locally surjective homomorphisms in
  hereditary graph classes. CoRR  \textbf{abs/2202.12438} (2022),
  \url{https://arxiv.org/abs/2202.12438}

\bibitem{feder1999list}
Feder, T., Hell, P., Huang, J.: List homomorphisms and circular arc graphs.
  Combinatorica  \textbf{19}(4),  487--505 (1999)

\bibitem{fiala2003np}
Fiala, J.: {NP} completeness of the edge precoloring extension problem on
  bipartite graphs. Journal of Graph Theory  \textbf{43}(2),  156--160 (2003)

\bibitem{n:FK01}
Fiala, J., Kratochv{\'{\i}}l, J.: Complexity of partial covers of graphs. In:
  Eades, P., Takaoka, T. (eds.) ISAAC. Lecture Notes in Computer Science,
  vol.~2223, pp. 537--549. Springer (2001)

\bibitem{n:FK06}
Fiala, J., Kratochv{\'{\i}}l, J.: Locally injective graph homomorphism: {Lists}
  guarantee dichotomy. In: Fomin, F.V. (ed.) WG. Lecture Notes in Computer
  Science, vol.~4271, pp. 15--26. Springer (2006)

\bibitem{DBLP:journals/dam/FialaKP08}
Fiala, J., Kratochv{\'{\i}}l, J., P{\'{o}}r, A.: On the computational
  complexity of partial covers of theta graphs. Discret. Appl. Math.
  \textbf{156}(7),  1143--1149 (2008). \doi{10.1016/j.dam.2007.05.051},
  \url{https://doi.org/10.1016/j.dam.2007.05.051}

\bibitem{n:Fiala00b}
Fiala, J.: Locally injective homomorphisms. Ph.D. thesis, Charles University,
  Prague (2000)

\bibitem{n:FHKT03}
Fiala, J., Heggernes, P., Kristiansen, P., Telle, J.A.: Generalized
  {$H$}-coloring and {$H$}-covering of trees. Nordic Journal of Computing
  \textbf{10}(3),  206--224 (2003)

\bibitem{n:FialaKKN14}
Fiala, J., Klav{\'{\i}}k, P., Kratochv{\'{\i}}l, J., Nedela, R.: Algorithmic
  aspects of regular graph covers with applications to planar graphs. In:
  Esparza, J., Fraigniaud, P., Husfeldt, T., Koutsoupias, E. (eds.) ICALP (1).
  Lecture Notes in Computer Science, vol.~8572, pp. 489--501. Springer (2014)

\bibitem{n:FialaKKN18}
Fiala, J., Klav{\'{\i}}k, P., Kratochv{\'{\i}}l, J., Nedela, R.: {3-connected}
  reduction for regular graph covers. European Journal of Combinatorics
  \textbf{73},  170--210 (2018)

\bibitem{n:FK02}
Fiala, J., Kratochv\'{\i}l, J.: Partial covers of graphs. Discussiones
  Mathematicae Graph Theory  \textbf{22},  89--99 (2002)

\bibitem{n:FK08}
Fiala, J., Kratochv\'{\i}l, J.: Locally constrained graph homomorphisms ---
  structure, complexity, and applications. Computer Science Review
  \textbf{2}(2),  97--111 (2008)

\bibitem{n:FP05}
Fiala, J., Paulusma, D.: A complete complexity classification of the role
  assignment problem. Theoretical Computer Science  \textbf{1}(349),  67--81
  (2005)

\bibitem{arxiv1609.03013}
Fiala, J., Klavík, P., Kratochvíl, J., Nedela, R.: Algorithmic aspects of
  regular graph covers (2016), \url{https://arxiv.org/abs/1609.03013}

\bibitem{n:Gardiner74}
Gardiner, A.: Antipodal covering graphs. Journal of Combinatorial Theory B
  \textbf{16},  255--273 (1974)

\bibitem{getzler1998modular}
Getzler, E., Kapranov, M.M.: Modular operads. Compositio Mathematica
  \textbf{110}(1),  65--125 (1998)

\bibitem{n:GriggsY92}
Griggs, J.R., Yeh, R.K.: Labelling graphs with a condition at distance 2.
  {SIAM} J. Discret. Math.  \textbf{5}(4),  586--595 (1992),
  \url{https://doi.org/10.1137/0405048}

\bibitem{n:GT77}
Gross, J.L., Tucker, T.W.: Generating all graph coverings by permutation
  voltage assignments. Discrete Mathematics  \textbf{18},  273--283 (1977)

\bibitem{hn}
Hell, P., Ne\v{s}et\v{r}il, J.: On the complexity of {$H$}-coloring. J. Combin.
  Theory Ser. B  \textbf{48}(1),  92--110 (1990)

\bibitem{n:KPT97a}
Kratochv\'{\i}l, J., Proskurowski, A., Telle, J.A.: Covering directed
  multigraphs {I.} colored directed multigraphs. In: M{\"o}hring, R.H. (ed.)
  WG. Lecture Notes in Computer Science, vol.~1335, pp. 242--257. Springer
  (1997)

\bibitem{n:KPT97}
Kratochv\'{\i}l, J., Proskurowski, A., Telle, J.A.: Covering regular graphs.
  Journal of Combinatorial Theory, Series B  \textbf{71}(1),  1--16 (1997)

\bibitem{n:KPT98}
Kratochv\'{\i}l, J., Proskurowski, A., Telle, J.A.: Complexity of graph
  covering problems. Nordic Journal of Computing  \textbf{5},  173--195 (1998)

\bibitem{n:KratochvilTT16}
Kratochv{\'{\i}}l, J., Telle, J.A., Tesa\v{r}, M.: Computational complexity of
  covering three-vertex multigraphs. Theoretical Computer Science
  \textbf{609},  104--117 (2016)

\bibitem{n:KT00}
Kristiansen, P., Telle, J.A.: Generalized {$H$}-coloring of graphs. In: Lee,
  D.T., Teng, S.H. (eds.) ISAAC. Lecture Notes in Computer Science, vol.~1969,
  pp. 456--466. Springer (2000)

\bibitem{kwak2007graphs}
Kwak, J.H., Nedela, R.: Graphs and their coverings. Lecture Notes Series
  \textbf{17} (2007)

\bibitem{n:Leighton82}
Leighton, F.T.: Finite common coverings of graphs. Journal of Combinatorial
  Theory B  \textbf{33},  231--238 (1982)

\bibitem{DBLP:conf/iwoca/LidickyT10}
Lidick{\'{y}}, B., Tesar, M.: Complexity of locally injective homomorphism to
  the theta graphs. In: Iliopoulos, C.S., Smyth, W.F. (eds.) Combinatorial
  Algorithms - 21st International Workshop, {IWOCA} 2010, London, UK, July
  26-28, 2010, Revised Selected Papers. Lecture Notes in Computer Science,
  vol.~6460, pp. 326--336. Springer (2010).
  \doi{10.1007/978-3-642-19222-7\_33},
  \url{https://doi.org/10.1007/978-3-642-19222-7\_33}

\bibitem{n:MalnivcMP04}
Malni{\v{c}}, A., Maru{\v{s}}i{\v{c}}, D., Poto{\v{c}}nik, P.: Elementary
  abelian covers of graphs. Journal of Algebraic Combinatorics  \textbf{20}(1),
   71--97 (2004)

\bibitem{n:MalnicNS00}
Malni\v{c}, A., Nedela, R., \v{S}koviera, M.: Lifting graph automorphisms by
  voltage assignments. European Journal of Combinatorics  \textbf{21}(7),
  927--947 (2000)

\bibitem{nedela_mednykh}
Mednykh, A.D., Nedela, R.: Harmonic Morphisms of Graphs: Part I: Graph
  Coverings. Vydavatelstvo Univerzity Mateja Bela v Banskej Bystrici, 1st edn.
  (2015)

\bibitem{n:NedelaS96}
Nedela, R., \v{S}koviera, M.: Regular embeddings of canonical double coverings
  of graphs. Journal of Combinatorial Theory, Series B  \textbf{67}(2),
  249--277 (1996)

\bibitem{n:OkrasaR20}
Okrasa, K., Rz\k{a}a\.zewski, P.: Subexponential algorithms for variants of the
  homomorphism problem in string graphs. J. Comput. Syst. Sci.  \textbf{109},
  126--144 (2020), \url{https://doi.org/10.1016/j.jcss.2019.12.004}

\bibitem{k:Ringel74}
{Ringel}, G.: {Map color theorem}, vol.~209. Springer, Berlin (1974)

\bibitem{arxiv1908.00830}
Shepherd, S., Gardam, G., Woodhouse, D.J.: Two generalisations of {L}eighton's
  theorem (2019), \url{https://arxiv.org/abs/1908.00830}

\bibitem{woodhouse_2021}
Woodhouse, D.J.: Revisiting {L}eighton’s theorem with the {H}aar measure.
  Mathematical Proceedings of the Cambridge Philosophical Society
  \textbf{170}(3),  615–623 (2021),
  \url{https://doi.org/10.1017/S0305004119000550}

\end{thebibliography}

\end{document}